\newcommand{\reals}{{\mathbb{R}}}
\newcommand{\range}{{\mathrm{im}}}
\newcommand{\diag}{\mathrm{blkdiag}}
\newcommand{\norm}[1]{\left\lVert#1\right\rVert}
\newcommand{\mnorm}[1]{{\left\vert\kern-0.25ex\left\vert\kern-0.25ex\left\vert #1 
    \right\vert\kern-0.25ex\right\vert\kern-0.25ex\right\vert}}
\newcommand{\mc}{\mathcal}
\newcommand{\mb}{\mathbb}
\newcommand{\A}{\mathcal{A}}
\newcommand{\half}{\frac{1}{2}}
\newtheorem{definition}{Definition} 
\newtheorem{theorem}{Theorem}
\newtheorem{corollary}{Corollary}
\newtheorem{remark}{Remark}
\newtheorem{proposition}{Proposition}
\algnewcommand\algorithmicforeach{\textbf{for each}}
\newcommand{\ndim}{n}
\newcommand{\X}{\reals}
\newtheorem{example}{Example}
\begin{document}
%Here goes the title
\title{Disturbance Decoupling for Gradient-based Multi-Agent Learning with Quadratic Costs}
\author{Sarah H. Q. Li$^{1}$, Lillian Ratliff$^{2}$, Beh\c cet\ A\c c\i kme\c se$^{1}$% <-this % stops a space
\thanks{*This research is partly funded by the following grants:  NSF CNS-1736582 and ONR N00014-17-1-2623.}% <-this % stops a space
\thanks{$^{1}$William E. Boeing  Department of Aeronautics and Astronautics, University of Washington, Seattle. 
        email: {\tt $\{$sarahli,behcet$\}$@uw.edu}
}%
\thanks{$^{2}$Department of Electrical and Computer Engineering, University of Washington, Seattle.
        email: {\tt ratliffl@uw.edu }}%
}

\maketitle

\thispagestyle{empty} % Removes the page number in the first page
%Main body starts
\begin{abstract}
Motivated by applications of multi-agent learning in noisy environments, this paper studies the robustness of gradient-based learning dynamics with respect to disturbances. While disturbances injected along a coordinate corresponding to any individual player's actions can always affect the overall learning dynamics, a subset of players can be \emph{disturbance decoupled}---i.e., such players' actions are completely unaffected by the injected disturbance. We provide necessary and sufficient conditions to guarantee this property for games with quadratic cost functions, which encompass quadratic one-shot continuous games, finite-horizon linear quadratic (LQ) dynamic games, and bilinear games. Specifically, disturbance decoupling is characterized by both algebraic and graph-theoretic conditions on the learning dynamics, the latter is obtained by constructing a \emph{game graph} based on gradients of players' costs. For LQ games, we show that disturbance decoupling imposes constraints on the controllable and unobservable subspaces of players. For two player bilinear games, we show that disturbance decoupling within a player's action coordinates imposes constraints on the payoff matrices.
Illustrative numerical examples are provided.
\end{abstract}

\section{Introduction}\label{sec:intro}
As the application of learning in multi-agent settings gains traction, game theory has emerged as an informative abstraction for understanding the coupling between algorithms employed by individual players (see, e.g.,~\cite{fudenberg1998theory,mazumdar2018convergence,chasnov:2019aa}).
% In a related field, the theory of learning in games seeks to analyze how players update their actions given observations of competitors' actions, and the convergence properties of such adaptive schemes~\cite{fudenberg1998theory}.
Due to scalability, a commonly employed class of algorithms in both games and modern machine learning approaches to multi-agent learning is \emph{gradient-based learning}, in which players update their individual actions using the {gradient} of their objective with respect to their action. In the gradient-based learning paradigm, continuous \emph{quadratic games} stand out as a benchmark due to their simplicity and ability to exemplify state-of-the-art multi-agent learning methods such as policy gradient and alternating gradient-descent-ascent~\cite{mazumdar2019policy}.  

Despite the resurgence of interest in learning in games, a gap  exists between algorithmic performance in simulation and physical application in part due to disturbances in measurements~\cite{shalev2017failures}.
Robustness  to environmental noise has been analyzed in a wide variety learning paradigms~\cite{li2019robust,bottou2010large}. %al2007model
Most analysis focuses on independent and identically distributed stochastic noise drawn from a stationary distribution. 

In contrast, we study \emph{adversarial disturbance} without any assumptions on its dynamics or bounds on its magnitude. 
Though some work exists on the effects of bounded adversarial disturbance in multi-agent learning~\cite{jiao2016multi}, %al2007model
there is limited understanding of how gradient disturbance propagates through the network structure as determined by the coupling of the players' objectives. 
%\textcolor{blue}{Idea for future work: formulate a multi-agent RL set-up for tabular MDPs using policy gradient. So here agents can be connected through an auxiliary state dynamics (also through cost, but a first step is just through the dynamics) which is linear because its the MDP. Examine how bounded noise propogagates through these dynamics. }
Does gradient-based learning fundamentally contribute to or reduce the propagation of disturbance through player actions? 
Our analysis aims to answer this question for gradient-based multi-agent learning dynamics. The insights we gain provide desiderata to support algorithm synthesis and incentive design, and will lead to improved robustness of multi-agent learning dynamics.

\textbf{Contributions.}  
The main contribution is providing a novel graph-theoretical perspective for analyzing disturbance decoupling in multi-agent learning settings. For quadratic games, we obtain a necessary and sufficient condition, which can be verified in polynomial time, that ensures complete decoupling between the corrupted gradient of one player and the learned actions of another player, stated in terms of algebraic and graph-theoretic conditions. The latter perspective leads to greater insight on the types of cost coupling structures that enjoy disturbance decoupling, and hence, provides a framework for designing agent interactions, e.g., via incentive design or algorithm synthesis. Applied to LQ games, a benchmark for multi-agent policy gradient algorithms, we show that disturbance decoupling enforces necessary constraints on the controllable subspace in relation to the unobservable subspace of individual players. Applied to bilinear games, we show that disturbance decoupling enforces necessary constraints on the players' payoff matrices.
% Applied to non-quadratic cost games and time varying learning dynamics, we observe that empirically the disturbance decoupling conditions are robust to non-linearities resultant from non-quadratic terms in player cost functions. 

% \textbf{Organization.} 
% % The paper is organized as follows. 
% In  Section~\ref{sec:litReview}, we review existing work. In Section~\ref{sec:gameSetUp}, we introduce continuous games, the notion of a game graph, and the disturbance model. In Section~\ref{sec:disturbance}, we define disturbance decoupling for gradient-based learning dynamics and derive both algebraic and graph-theoretic guarantees on disturbance decoupling. In Section~\ref{sec:sim}, we present an illustrative example where a tug-of-war game is modelled as a LQ game.
% % , we show that for specific dynamic structures, subsets of players are disturbance decoupled. 
% In Section~\ref{sec:conclusion}, we make concluding remarks.
\section{Related Work}\label{sec:litReview}
We study gradient-based learning for $N$--player quadratic games with continuous cost functions and action sets. Convergence guarantees for gradient-based learning are studied from numerous perspectives including game theory~\cite{fudenberg1998theory,ratliff:2016aa,chasnov:2019aa}, control~\cite{shamma2005dynamic}, and machine learning~\cite{zhou:2017aa,mazumdar2018convergence}.
%A differential characterization of local Nash equilibria for continuous games is provided in~\cite{}, where the authors considered equilibria properties such as structural stability and isolation. 

Convergence guarantees for gradient-based learning dynamics under stochastic noise are studied in~\cite{chasnov:2019aa,mazumdar2018convergence, zhou:2017aa}. Despite being an important property to understand for adversarial disturbance, how non-stochastic noise propagates through the player network has no guarantees.

Our analysis draws on geometric control~\cite{trentelman2012control,wonham1974linear, dion2003generic}. In~\cite{trentelman2012control}, algebraic conditions for disturbance decoupling within a single dynamical system is given. In~\cite{dion2003generic}, disturbance decoupling for a single structured dynamical system is studied with frequency-based techniques. In this paper, we provide both the algebraic and graph-theoretic conditions for disturbance decoupling of coupled dynamical systems in gradient-based multi-agent learning.

\section{Continuous games and the game graph model}\label{sec:gameSetUp}
Let $[N] = \{1, 2, \ldots, N\}$ denote the index set where $N\in \mb{N}$. 
For a function $f\in C^r(\X^{n},\reals)$
with $r\geq 2$, 
$D_if=\partial f/\partial x_i$ is the partial derivative with respect to $x_i$. 
%Similarly, let $D_{ij}f=\partial^2 f/(\partial x_i\partial x_j)\in \mb{R}^{\ndim_i\times \ndim_j}$ and $D_i^2f=\partial^2 f/\partial x_i^2\in \mb{R}^{\ndim_i\times \ndim_i}$. 

Consider an $N$-player continuous game $(f_1,\ldots, f_N)$ where for each $i\in [N]$,  $f_i\in C^r(\reals^{n}, \mb{R})$ with $r\geq 2$ is player $i$'s cost function and $\reals^{n}=\reals^{n_1} \times \ldots \times \reals^{n_N}$ is the joint action space, with $\reals^{n_i}$ denoting player $i$'s action space and $n = \sum_{i = 1}^N n_i$. 
% Let $\X_i$ be an $\ndim_i$ dimensional Euclidean space, such that 
% has dimension $n = \sum_{i = 1}^N n_i$.
Each player's goal is to select an action $x_i \in \X^{n_i}$ to minimize its cost  $f_i: \X^{n} \rightarrow \reals$ given the actions of all other players.
That is, player $i$ seeks to solve the following optimization problem:
\begin{equation}\label{eqn:individualoptimization}
\min_{x_i\in \X^{n_i}} \ f_i(\underbrace{x_1, \ldots, x_i,\ldots, x_N}_{\displaystyle := x}).
\end{equation}
One of the most common characterizations of the outcome of a continuous game is a Nash equilibrium.
\begin{definition}[Nash equilibrium] 
For an $N$--player continuous game $(f_1, \ldots, f_N)$, a joint action $x^\star = (x^\star_1, \ldots, x^\star_N)\in \mathbb{R}^n$  is a Nash equilibrium if  for each $i \in [N]$,
\begin{equation*}
 f_i(x^\star) \leq  f_i(x^\star_1, \ldots,x_{i-1}^\star, x_i, x_{i+1}^\star,\ldots, x^\star_N), \ \forall  \ x_i \in \mathbb{R}^{n_i}.
\end{equation*}
\end{definition}
\subsection{Gradient-based learning}
We consider a class of simultaneous play, gradient-based multi-agent learning techniques such that at iteration $k$, player $i$ receives $h_i(x^k)$ from an oracle to update its action as follows:
\begin{equation}\label{eqn:gradientUpdate}
    \textstyle x_i^{k+1}= x_i^{k} - \gamma_i h_i(x_1^k, \ldots, x_N^k),
\end{equation}
where $\gamma_i > 0$ is player $i$'s step size,
\begin{equation}\label{eqn:noiseModel}
  \textstyle h_i(x^k) = D_if_i(x^k) + d_i^k 
\end{equation}
is player $i$'s gradient evaluated at the current joint action $x^k$ and affected by a player-specific, arbitrary additive disturbance  $d_i^k\in \reals^{n_i}$. In the setting we analyze, $d^k_i$ can modify $x^k_i$ to any other action within $\mathbb{R}^{n_i}$.

Under reasonable assumptions on  step sizes---e.g., relative to the spectral radius of the Jacobian of $h_i$ in a neighborhood of a critical point---it is known that the undisturbed dynamics converge~\cite{mazumdar2018convergence, chasnov:2019aa}. 
While such a guarantee cannot be given for arbitrary disturbances as considered in this paper, we provide conditions under which a subset of players still equilibriates and follows the undisturbed dynamics.
\subsection{Quadratic games}
For an $N$--player continuous game $(f_1, \ldots, f_N)$, behavior of gradient-based learning around a local Nash equilibrium can be approximated by linearizing the learning dynamics, where the \emph{linearization} corresponds to a quadratic game.

\begin{definition}[Quadratic game]\label{def:quadGame}
For each $i \in [N]$, $f_i:\mb{R}^{n}\to\mb{R}$ is defined by 
\begin{equation}\label{eqn:quadCost}
\textstyle f_i(x) = \half x_i^\top P_ix_i + x_i^\top(\textstyle {\sum}_{j\neq i} P_{ij}x_j + r_i).
\end{equation}
\end{definition}
Quadratic games encompass potential games~\cite{monderer1996potential} with $P_{ij} = P_{ji}^\top$, and zero sum games~\cite{gillies1959solutions} with  $P_{ij} = -P_{ji}^\top$. We give further examples of quadratic games in Section~\ref{sec:quadGame_subClasses}. 
\subsection{Game graph}\label{sec:gameGraph}
To highlight how an individual player's action updates depend on others' actions, we associate a directed graph to the gradient-based learning dynamics defined in \eqref{eqn:gradientUpdate}. 

We consider a directed graph $([N], \mc{E})$, where $[N]$ is the index set for the nodes in the graph, and $\mc{E}$ is the set of edges. Each node $i \in [N]$ is associated with action $x_i$ of the $i^{th}$ player. A directed edge $(j, i)$ points from $j$ to $i$ and has weight matrix $W_{ij} \in \reals^{n_i\times n_j}$, such that $(j, i) \in \mc{E}$ if $W_{ij} \neq 0$ element-wise.
% \textcolor{Green}{LJR: is this consistent with how you use $W_{ij}$ everywhere? its seems weird to me that $W_{ij}$ is the weight for directed edge $j$ to $i$ instead of $i$ to $j$} 
For each node $i$, we assume the self loop edge $(i, i)$ always exists and has weight $W_{ii} \in \reals^{n_i \times n_i}$. The composite matrix $W \in \reals^{n \times n}$ with entries $W_{ij}$ is the adjacency matrix of the \emph{game graph}.  

On a game graph, we define a path $p = (i, v_1, \ldots, v_{k-1}, j)$ as a sequence of nodes connected by edges. The set of paths $ \mc{P}^k_{ij}$ includes all paths starting at $i$ and ending at $j$, traversing $k+1$ nodes in total. For a path $p = (i, v_1, \ldots, v_{k-1}, j)$, we define its \emph{path weight} as the product of consecutive edges on the path, given by $W_{j, v_{k-1}}\ldots W_{v_1, i} =  \prod^{k-1}_{l = 0} W_{v_{l+1}, v_{l}}$.

In the absence of disturbances $d_i$, the update in~\eqref{eqn:gradientUpdate} for a quadratic game reduces to
\begin{equation}\label{eqn:gameGraph_dynamics}
 \textstyle x^{k+1} = W x^k - \Gamma \bar{r} ,  
\end{equation}
where $\bar{r} = \begin{bmatrix}r_1^\top& \ldots& r_N^\top\end{bmatrix}^\top$, $W_{ii} = I_{n_i} - \gamma_i P_i$, $W_{ij} = -\gamma_iP_{ij}$,  and $\Gamma=\diag(\gamma_1 I_{n_1}, \ldots, \gamma_N I_{n_N})$.
\subsection{Subclasses of games within quadratic games}\label{sec:quadGame_subClasses}
To both illustrate the breadth of quadratic games and provide exemplars of the game graph concept, we describe two important subclasses of games and their game graphs. 
\subsubsection{Finite horizon LQ game}~\label{ex:lqGame_setUp}
% to alter the system state $z \in \reals^{m}$. 
Given initial state $z^0 \in \reals^{m}$ and horizon $T$, each player $i$ in an $N$-player, finite-horizon LQ game selects an action sequence $(u_i^0,\ldots, u_i^{T-1})$ with $u_i^t \in \reals^{m_i}$  in order to minimize a cumulative state and control cost subjected to state dynamics:
\begin{equation}
 \begin{aligned}
   \underset{u_i^t\in \reals^{m_i}}{\min} & \ \tfrac{1}{2} \big( \textstyle\sum_{t = 0}^{T}(z^t)^\top Q_i z^t +  \textstyle\sum_{t = 0}^{T-1} (u_i^t)^\top R_i u_i^t \big)\\
   \text{s.t. }& \ z^{t+1} = Az^t + \textstyle \sum_{i = 1}^NB_iu_i^t, \ t = 0, \ldots, T-1. \label{eqn:LQGame_dynamics}
\end{aligned}   
\end{equation}
The LQ game defined by the collection of optimization problems~\eqref{eqn:LQGame_dynamics} for each $i\in [N]$ is equivalent to a one-shot quadratic game in which each player selects $U_i = [(u_i^0)^\top, \ldots, (u_i^{T-1})^\top]^\top\in \reals^{{n}_i}$ with $n_i=Tm_i$, in order to minimize their cost $f_i(U)$ defined by
\[\tfrac{1}{2}(\textstyle\sum_{j = 1}^N G_jU_j + Hz^0)^\top\bar{Q}_i(\sum_{j = 1}^N G_jU_j + Hz^0) + \tfrac{1}{2}U_i^\top\bar{R}_iU_i,\]
where $U = (U_1, \ldots, U_N)$ is the joint action profile, and the cost matrices are given by $\bar{Q}_i  = \diag\{Q_i, \ldots, Q_i\}$, 
\begin{equation}\label{eqn:LQ_game_parameters}
    \begin{aligned}
    G_i &= \begin{bmatrix}
        0 & \ldots & 0 \\
        B_i & \ldots & 0 \\
        \vdots & \ddots & \vdots \\
        A^{T-1}B_i & \hdots & B_i
        \end{bmatrix}\!, H = \begin{bmatrix}I \\\vdots\\A^{T} \end{bmatrix}\!, 
    \end{aligned}
\end{equation}
and $\bar{R}_i = \diag \{R_i, \ldots, R_i\}$.
This follows precisely from observing that the dynamics are equivalent to $Z = \sum_{i=1}^NG_iU_i + Hz^0$ where $Z = [(z^0)^\top, \ldots, (z^T)^\top]^\top$. From here, it is straight forward to rewrite the optimization problem in \eqref{eqn:LQGame_dynamics} as $\min_{U_i}f_i(U)$. The LQ game is a potential game if and only if $Q_i = Q_j$ and $R_i = R_j$ for all $i, j \in [N]$. 
%\begin{bmatrix}I \\\vdots\\A^T \end{bmatrix}z^0 + %\textstyle\underset{i = 1}{\sum^N}\begin{bmatrix}
%0 & \ldots & 0 \\
%B_i & \ldots & 0 \\
%\vdots & \ddots & \vdots \\
%A^{T-1}B_i & \hdots & B_i
%\end{bmatrix}U_i.  \]
%Equivalence between $f_i(z, u)$ from~\eqref{eqn:LQGame_dynamics} and $f_i(U)$ follows algebraically.  
%\end{proof}

% \textcolor{Green}{LJR: the formatting for this paper doesnt look good with no indents because there is no spacing between paragraphs so I removed this.}

\textbf{LQ Game Graph.}\label{ex:lqGame_Graph}
Suppose each player uses step size $\gamma_i$. Since,  $D_{i} f_i(U)$ is given by
\begin{equation}\label{eqn:lqGame_derivative}
     \textstyle (G_i^\top\bar{Q}_iG_i + \bar{R}_i)U_i + G_i^\top\bar{Q}_i (\sum_{j \neq i}G_jU_j+Hz^0),
\end{equation}
 the learning dynamics~\eqref{eqn:gameGraph_dynamics} are equivalent to
\begin{equation}\label{eqn:lqGame_update}
U^{k+1}  = W U^k - \Gamma[\bar{Q}_1G_1, \ldots, \bar{Q}_NG_N]^\top Hz^0,
\end{equation}
where $W = I_n - M$, with $M \in \reals^{n\times n}$ a blockwise matrix having entries $M_{ij} = \gamma_iG_i^\top\bar{Q}_iG_j$ if $i \neq j$ and $M_{ij} = \gamma_i(G_i^\top\bar{Q}_iG_i + \bar{R}_i)$ otherwise. 
% \begin{equation}
%     \begin{bmatrix}
%         I_{n_1} - \gamma_1(G_1^\top\bar{Q}_1G_1 + \bar{R}_1)& \ldots &-\gamma_1G_1^\top\bar{Q}_1G_N \\
%         \vdots & \ddots & \vdots \\
%         -\gamma_N G_N^\top\bar{Q}_NG_1 & \ldots & I_{n_N} - \gamma_N(G_N^\top\bar{Q}_NG_N + \bar{R}_N)
%     \end{bmatrix}
% \end{equation}
% Therefore in the game graph for LQ games, $W_{ij} = -\gamma_iG_i^\top\bar{Q}_jG_j$ if $i \neq j$, and $W_{ii} = I - \gamma_i(G_i^\top\bar{Q}_iG_i + \bar{R}_i)$ otherwise. 

\subsubsection{Bilinear games}~\label{ex:zeroSumGame}
Bilinear games are an important class of games. For instance, a number of game formulations in adversarial learning have a hidden bilinear structure~\cite{vlatakis2019poincare}. In evaluating and selecting hyper-parameter configurations  in so-called \emph{test suites}, pairwise comparisons between algorithms are formulated as bimatrix games~\cite{balduzzi2018re, balduzzi2020smooth}. 
%Often these formulations are in fact bimatrix games, a class of games defined by a pair of matrices $(A_1,A_2)$ such that player $i$ seeks to minimize $x_i^\top A_i x_{-i}$.
%for analyzing gradient-based algorithms under GAN and adversarial training settings~\cite{goodfellow2014generative,madry2017towards}. 
%Polymatrix games are many player games such that player interactions are captured by an interaction graph where nodes are players and each edge represents a bimatrix game between the two nodes it connects~\cite{yanovskaya1968equilibrium, cai2016zero}. Players choose a single strategy for all their bimatrix games. Polymatrix games are in fact bilinear games.
%

Formally, a \emph{two player bilinear game}\footnote{The bilinear game formulation and corresponding game graph for different gradient-based learning rules  easily  extend to an $N$-player setting, however the results in Sec.~\ref{sec:disturbance} are presented for two player games.}, a subclass of continuous quadratic games, is defined by $f_1(x_1,x_2)=x_1^\top Ax_{2}$ and $f_2(x_1,x_2)=x_1^\top B^\top x_2$  where $A\in \mb{R}^{n_1\times n_2}$ and $B\in \mb{R}^{n_2\times n_1}$ and $x_i\in \mb{R}^{n_i}$. %using a directed graph $(V, E)$ where $V$ corresponds to the set of players and where every edge corresponds to a bilinear game between its two endpoints (i.e., players). Each player $i \in V$ selects $x_i\in \mb{R}^{n_i}$ in order to minimize its cost $f_i(x_i,x_{-i})=\sum_{j\in N_i}x_{i}^\top A^{ij}x_j$ where $A^{ij}\in \mb{R}^{n_i\times n_j}$. That is, player $i$'s  cost  is the sum of the costs they receive from
 %every \emph{bilinear game} in which they
%participate. 
%As such, $N$-player bilinear games are clearly in the class of quadratic continuous games.
%
Common approaches to learning in games~\cite{vlatakis2019poincare,bailey2019finite},  simultaneous and alternating gradient descent both
%re commonly studied approaches to learning in such games. Both of these learning paradigms 
correspond to a linear system.

\textbf{Game graph for simultaneous gradient play.}  Players update their strategies simultaneously by following the gradient of their own cost with respect to their choice variable: 
\begin{equation}\label{eqn:simGrad_update}
x^{k+1}_1=x_1^k  - \gamma_1 A x_{2}^k,\ x_2^{k+1}  =x_2^k-\gamma_2 Bx_1^k
%\left.\begin{array}{lcl}
%x^{k+1}_1 &=& x_1^k  - \gamma_1 A x_{2}^k \\
%x_2^{k+1} & =&x_2^k-\gamma_2 Bx_1^k\end{array}\right\}.
\end{equation} 
The simultaneous gradient play game graph 
is given by
\begin{equation}\label{eqn:simGrad_gameGraph}
    \textstyle W_s=\begin{bmatrix}I & -\gamma_1A\\
     -\gamma_2B&I\end{bmatrix}.
 \end{equation} 
%for $N$-player bilinear games is defined by $W$ where $W_{ii}=I$ and $W_{ij}=-\gamma A^{ij}$ for $i\neq j$.
%simGD is given by $x^{k+1} = Wx^k$, where $W_{ij} = -\gamma_i A_{ij}$ if $j \in [N_i(\min)]$, $W_{ij} = \gamma_iA_{ji}^\top$ if $j \in [N_i(\max)]$, $W_{ii} = I_{n_i}$, and $W_{ij} = 0$ otherwise.

\textbf{Game graph for alternating gradient play.} 
%Under altGD, maximizing and minimizing players alternate their updates. During each update, players first update their actions based on costs of games in which they are the maximizing player. Then, players update their actions based on costs of games in which they are the minimizing player. I.e., 
%In a fully connected $N$-player bilinear game without self-play (i.e., where $N_i=[N]/\{i\}$), players update their strategies by updating in a sequential order. Without loss of generality, let the order of play correspond to the player indices. That is, player 1 plays first. Then player 2, and so on. 
%Player updates are defined by
%\begin{align*}
% x_1^{k+1} & = \textstyle  x_1^k - \gamma_1\sum_{j=2}^N A^{1j}x^k_j, \\
% x_2^{k+1} &= \textstyle x_2^k - \gamma_2\big( %A^{21}x_{1}^{k+1}+\sum_{\ell=3}^N A^{2\ell}x^k_\ell\big), \\
% \vdots\quad & \qquad\qquad\qquad\vdots\\
% x_j^{k+1} &= \textstyle x_j^k - \gamma_2\big( \sum_{\ell %=1}^{j-1}A^{j\ell}x_{\ell}^{k+1}+\sum_{\ell=j+1}^N %A^{j\ell}x^k_\ell\big)
 %   %x_i^{k + 1} & = \textstyle x_i^k  - \gamma_i \sum_{j \in %[N_i(\min)]} A_{ij}x^{k + \half}_j,
%\end{align*}
In zero-sum bilinear games, it has been shown that alternating gradient play has better convergence properties~\cite{bailey2019finite}.
%A two-player bilinear game is defined by cost matrices $(Q^{11},Q^{12},Q^{21}, Q^{22})$, and if its zero sum $Q^{21}=-(Q^{12})^\top$ and $Q^{ii}=0$ for $i=1,2$. In such games, 
Alternating gradient play is defined by  
\begin{equation}\label{eqn:algGrad_update}
x_1^{k+1} =  x_1^k - \gamma_1Ax^k_2, \ x_2^{k+1} = x_{2}^{k}-\gamma_2Bx_{1}^{k+1}
  %  \left. \begin{array}{lcl} x_1^{k+1} &=&  x_1^k - \gamma_1Ax^k_2\\
  %  x_2^{k+1} &=& x_{2}^{k}-\gamma_2Bx_{1}^{k+1}
  %  %(I-\gamma_2 Q^{22})x_2^k - \gamma_2 Q^{21} x_1^{k+1} % 
  %  \end{array}\right\}.
\end{equation}
Examining the second player's update, we see that $x_2^{k+1}=(I+\gamma_1\gamma_2BA)x_2^k-\gamma_2 Bx_1^k$. The game graph in this case is defined by
 \begin{equation}\label{eqn:altGrad_gameGraph}
     W_a=\begin{bmatrix}I & -\gamma_1A\\
     -\gamma_2B&I+\gamma_1\gamma_2BA\end{bmatrix}.
 \end{equation} 
\begin{remark}Convergence  of~\eqref{eqn:simGrad_update} and boundedness of~\eqref{eqn:algGrad_update} depend on choosing appropriate step sizes $\gamma_{1}$ and $\gamma_2$~\cite{chasnov:2019aa,bailey2019finite}. We consider disturbance decoupling for settings such as these where the undisturbed dynamics are convergent. 
\end{remark}
\section{Disturbance Decoupling on Game Graph}\label{sec:disturbance}
In this section, we derive the necessary and sufficient condition that ensures decoupling of gradient disturbance from the learning trajectory of a subset of players. We emphasize that the condition holds for disturbances with arbitrary magnitudes and functions.
% ---i.e., given that player $i$ is receiving noisy gradient, we show that specific cost structures will ensure player $j$'s Nash strategy as well as trajectory of convergence cannot be affected by any amount of input noise from player $i$.  
This is a useful result because it provides guarantees on both the equilibrium behavior and the learning trajectory under adversarial disturbance.

% We first define disturbance decoupling as the requirement that a subset of player actions remains unchanged in corrupted learning dynamics. 

\begin{definition}[Complete disturbance decoupling] \label{def:completeDisturbance}
Given initial joint action $x^0 \in \reals^{n}$, game costs $(f_1, \ldots, f_N)$, step sizes $\Gamma \in \reals^{n\times n}$, suppose that player $i$'s gradient update is corrupted as in~\eqref{eqn:noiseModel}, then for player $j \neq i$, action $x_j$ is decoupled from the disturbance in player $i$'s gradient if the uncorrupted and corrupted dynamics, given respectively by
\begin{equation}\label{eqn:noiseNoNoiseTrajectory}
 x^{k+1}  = W x^k - \Gamma\bar{r}, \ y^{k+1} = W y^k - \Gamma \bar{r} - \Gamma d^k
\end{equation}
result in identical trajectories for player $j$ when $y^0 = x^0$. That is, $y^{k}_j = x^{k}_j$ holds for all $k \geq 0$, $d^k \in \mc{D}_i$, where
\[\textstyle\mc{D}_i  = \{d  = [d_1, \ldots, d_N ]^\top \in \reals^{\ndim} \ |\ d_j = 0, \forall \ j \neq i\}.\]
\end{definition}
\subsection{Algebraic condition}
We first derive an algebraic condition on the joint action space for disturbance decoupling.
Define $\mc{M}^\perp = \{x \in \reals^n \ | \ x^\top \tilde{x} = 0, \ \forall \  \tilde{x} \in \mc{M}\}$  and let $\range(A) = \{Ax \ |  \ x \in \reals^{n}\}$ denote the image of  $A\in \reals^{m\times n}$.
\begin{proposition}
\label{prop:disturbanceRejection}Consider an $N$-player quadratic game  $(f_1,\ldots, f_N)$ as in Definition \ref{def:quadGame} under learning dynamics as given by~\eqref{eqn:gradientUpdate}, %by~\eqref{eqn:gameGraph_dynamics}, 
where player $i$ experiences gradient disturbance as given by~\eqref{eqn:noiseModel}. Let  $\mc{S}(i) = \{ x = [x_1, \ldots, x_N]^\top \in \reals^{n}\  |\ x_j = 0,  \ \forall \ j \neq i \}$ be the joint action subset. For player $j \neq i$, the following statements are equivalent:
\begin{enumerate}
    \item[(i)] Player $j$ is disturbance decoupled from player $i$.
   \item[(ii)] 
  $ W^{k} v \in {\mc{S}}(j)^\perp$, $\forall\ v \in \mc{S}(i)$,  $\forall\ 0 \leq k < n$.
\item[(iii)] $\range(W^kE) \subseteq \range(Y)$, $\forall\ 0 \leq k < n$, where $E \in \reals^{n\times n_i}$ and $Y \in \reals^{n\times (n - n_j)}$ are matrices such that $\range(E) = \mc{S}(i)$ and $\range(Y) = \mc{S}(j)^\perp$.
%Let  ,   and $\range(Y) = \mc{S}(j)^\perp$, , then for $0 \leq k < N$,  $\range(W^kE) \subseteq \range(Y)$.
\end{enumerate}
\end{proposition}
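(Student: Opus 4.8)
The plan is to work with the error trajectory $e^k := y^k - x^k$ and reduce disturbance decoupling to an algebraic reachability condition. Subtracting the two recursions in~\eqref{eqn:noiseNoNoiseTrajectory} and using $y^0 = x^0$, the constant term $-\Gamma\bar{r}$ cancels and I obtain the linear error dynamics $e^{k+1} = W e^k - \Gamma d^k$ with $e^0 = 0$. Unrolling this recursion gives the closed form $e^k = -\sum_{\ell = 0}^{k-1} W^{k-1-\ell}\Gamma d^\ell$. By Definition~\ref{def:completeDisturbance}, player $j$ is disturbance decoupled precisely when the $j$-th block $e^k_j$ vanishes for every $k \geq 0$ and every admissible disturbance sequence with $d^\ell \in \mc{D}_i$.

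To prove (i) $\Leftrightarrow$ (ii), I would first observe that because $\Gamma$ is block diagonal with $\gamma_i > 0$, the vector $\Gamma d^\ell$ sweeps out exactly $\mc{S}(i)$ as $d^\ell$ ranges over $\mc{D}_i$. Since the disturbances at distinct iterations are independent free variables, I would turn on one at a time: setting all but one $d^\ell$ to zero shows that $e^k_j = 0$ for all sequences is equivalent to $(W^m v)_j = 0$ for every $v \in \mc{S}(i)$ and every power $m \geq 0$ that appears, while conversely these per-term conditions make the whole sum vanish. The powers $W^m$ arising across all $k$ are exactly $m = 0, 1, 2, \ldots$, and $(W^m v)_j = 0$ is by definition $W^m v \in \mc{S}(j)^\perp$. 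This yields the condition for all $m \geq 0$; to truncate it to $0 \leq m < n$ I would invoke the Cayley--Hamilton theorem, writing $W^n$ as a linear combination of $I, W, \ldots, W^{n-1}$, so that $W^m v \in \mc{S}(j)^\perp$ for $m < n$ forces the same membership for all larger $m$ by induction, using that $\mc{S}(j)^\perp$ is a subspace.

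The equivalence (ii) $\Leftrightarrow$ (iii) is then a direct restatement in terms of images. Using $\range(E) = \mc{S}(i)$, the set $\{W^k v \mid v \in \mc{S}(i)\}$ equals $\range(W^k E)$, and since $\range(Y) = \mc{S}(j)^\perp$, the requirement $W^k v \in \mc{S}(j)^\perp$ for all $v \in \mc{S}(i)$ is exactly the containment $\range(W^k E) \subseteq \range(Y)$.

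I expect the main obstacle to be the careful justification of the reduction from the trajectory-wide requirement ($e^k_j = 0$ for every $k$ and every disturbance sequence) to a finite family of algebraic conditions: the ``one disturbance at a time'' argument must be phrased so that it genuinely decouples the summands, and the Cayley--Hamilton step is what makes the bound $k < n$ sufficient rather than needing all powers. Everything else amounts to bookkeeping on the block structure of $W$ and $\Gamma$.
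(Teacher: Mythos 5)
Your proof is correct and takes essentially the same route as the paper's: unroll the linear dynamics (you via the error $e^k = y^k - x^k$, the paper via the two explicit trajectories), reduce the trajectory-wide requirement to the per-term condition $W^k v \in \mc{S}(j)^\perp$ by activating one disturbance at a time, truncate to $0 \leq k < n$ via Cayley--Hamilton using the subspace structure, and read condition (iii) as a restatement of (ii) in image form. Your explicit remark that $\Gamma d^\ell$ sweeps out exactly $\mc{S}(i)$ because $\Gamma$ is block diagonal with positive entries is a small point the paper leaves implicit, but it does not constitute a different approach.
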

\begin{proof}
For a  quadratic game $(f_1,\ldots, f_N)$, the learning dynamics without and with disturbances reduce to the equations  in~\eqref{eqn:noiseNoNoiseTrajectory}.
Given initial joint action $x^0$, 
\begin{align*}
\textstyle x^k & = W^k x^0 - \textstyle \begin{bmatrix} W^{k-1} & \ldots & W^0\end{bmatrix}\Gamma\begin{bmatrix}\bar{r}^\top & \ldots, & \bar{r}^\top \end{bmatrix}^\top, \\ \textstyle y^k &=  %\textstyle W^k x^0 
%- \begin{bmatrix} W^{k-1} & \ldots & W^0\end{bmatrix}\Gamma\begin{bmatrix}\bar{r}^\top & \ldots, & \bar{r}^\top \end{bmatrix}^\top \\
     x^k  - \begin{bmatrix} W^{k-1} & \ldots & W^0\end{bmatrix}\Gamma \begin{bmatrix}(d^0)^\top & \ldots, & (d^{k -1} )^\top \end{bmatrix}^\top\!.   %
\end{align*}
Then, Definition~\ref{def:completeDisturbance} is equivalent to $\textstyle\sum_{l = 0}^{M-1}W^{M-l-1} d^l \in {\mc{S}}(j)^{\perp}$ satisfied for $M \geq 1$ and $d^l \in \mc{S}(i)$. 
Since the condition holds for all $M\geq 1$, it is equivalent to $W^kd^l \in  {\mc{S}}(j)^{\perp}$ for all $k \geq 0$ and $d^l \in \mc{S}(i)$. 
This is then equivalent to $\textstyle W^kd^l \in  {\mc{S}}(j)^{\perp}$  for all $0 \leq k < n$ and $d^l \in \mc{S}(i)$.  
To see this equivalence, consider the following result from Cayley-Hamilton theorem,
% \todo[inline]{I think $W$ is of size $n$, so we should go all the way up to n in the sum and the related statements. This must be changed in the statement of the result. Change $N$ to $n$}
%
%\todo[inline]{This should sum to $n-1$ right since $W^k$ is $n\times n$?}
$W^k = \sum_{l = 0}^{n-1}\alpha_l W^l$ for some $\alpha_l \in \reals$. Thus, for $k \geq n$ and any $d \in \mc{S}(i)$, $W^k d = \sum_{l = 0}^{n-1}W^l \alpha_l d = \sum_{l = 0}^{n-1}W^l \hat{d}_l$ where $\hat{d}_l = \alpha_l d  \in  {\mc{S}}(i)$ for $l=0,\ldots, n-1$, which implies that $W^kd \in  {\mc{S}}(j)^{\perp}$. This concludes the equivalence.

Finally, we note that $(iii)$ is a restatement of $(ii)$. Furthermore, $(iii)$ can be verified in polynomial time.
\end{proof}
% \vspace{-5pt}
\begin{remark}
In connection to geometric control theory, condition $(iii)$ of Proposition~\ref{prop:disturbanceRejection} is equivalent the fact that $\range( [E, \ldots, W^{n-1}E])$, the smallest $W$-invariant subspace containing $\range(E)$, must be a subset of $\mc{S}(j)^{\perp}$~\cite[Thm 4.6]{trentelman2012control}.
% Furthermore, it provides a condition that can be checked in polynomial time.
\end{remark}
\subsection{Graph-theoretic condition}
Next we derive the graph-theoretic condition on the joint action space for disturbance decoupling.
\begin{theorem}\label{thm:disturbanceDecoupling}Consider an $N$-player quadratic game $(f_1,\ldots, f_N)$ as in Definition \ref{def:quadGame} under learning dynamics as given by~\eqref{eqn:gradientUpdate}, where player $i$ experiences gradient disturbance as given by~\eqref{eqn:noiseModel}. Player $j \neq i$ is disturbance decoupled if and only if the path weights of paths with length $k$ satisfy
\begin{equation}\label{eqn:pathCondition}
\underset{ p \in \mc{P}^k_{ij}}{\sum}\overset{k-1}{\underset{l=0}{\prod}} W_{v_{l+1}, v_l} = 0,\ \forall \ 0 < k < n,
\end{equation}
where $(v_l,v_{l+1})$ denotes consecutive nodes on path $p = (i, v_1,\ldots, v_{k-1}, j)$. 
\end{theorem}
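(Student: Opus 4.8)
The plan is to derive the graph-theoretic condition directly from the algebraic characterization in Proposition~\ref{prop:disturbanceRejection}, specifically part (ii), by expressing the block entries of the matrix powers $W^k$ in terms of path weights on the game graph. First I would recall that $\mc{S}(i) = \range(E)$ consists of vectors whose only nonzero block is the $i$-th, and that $\mc{S}(j)^\perp = \{u \in \reals^n \mid u_j = 0\}$ is exactly the set of vectors whose $j$-th block vanishes. Writing $v \in \mc{S}(i)$ as $v = Ew$ with $w \in \reals^{n_i}$, the $j$-th block of $W^k v$ equals $(W^k)_{ji}\, w$, where $(W^k)_{ji} \in \reals^{n_j \times n_i}$ denotes the $(j,i)$ block of $W^k$. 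Condition (ii) requires this to vanish for every $w$, and is hence equivalent to requiring $(W^k)_{ji} = 0$ for all $0 \le k < n$.

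The crux is then to identify $(W^k)_{ji}$ with the path-weight sum appearing in~\eqref{eqn:pathCondition}. I would establish, by induction on $k$, the block-matrix identity
\[
(W^k)_{ji} = \sum_{p \in \mc{P}^k_{ij}} \prod_{l=0}^{k-1} W_{v_{l+1}, v_l},
\]
i.e. the $(j,i)$ block of the $k$-th power of the adjacency matrix is the sum, over all length-$k$ paths $p = (i, v_1, \ldots, v_{k-1}, j)$ from $i$ to $j$, of the corresponding path weights as defined in Section~\ref{sec:gameGraph}. The base case $k = 1$ is immediate since $(W)_{ji} = W_{ji}$ is the single-edge weight. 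For the inductive step, block multiplication $W^{k+1} = W^k W$ gives $(W^{k+1})_{ji} = \sum_{v \in [N]} (W^k)_{jv}\, W_{vi}$; substituting the inductive hypothesis for $(W^k)_{jv}$ and observing that prepending the edge $(i, v)$ to each length-$k$ path from $v$ to $j$ enumerates precisely the length-$(k+1)$ paths from $i$ to $j$ completes the step. This bookkeeping---matching the ordered products of block weights to the path enumeration while keeping the index order $W_{v_{l+1}, v_l}$ consistent with the matrix-multiplication order---is the main obstacle, though it is purely combinatorial.

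Finally, I would combine the two observations: player $j$ is disturbance decoupled from player $i$ iff $(W^k)_{ji} = 0$ for all $0 \le k < n$, which by the identity above is exactly $\sum_{p \in \mc{P}^k_{ij}} \prod_{l=0}^{k-1} W_{v_{l+1}, v_l} = 0$ for all $0 \le k < n$. The range can be tightened to $0 < k < n$: since $j \neq i$, the $k = 0$ term $(W^0)_{ji} = (I)_{ji} = 0$ holds automatically---equivalently, there is no length-$0$ path from $i$ to a distinct node $j$---so it imposes no constraint. This yields~\eqref{eqn:pathCondition} and establishes the equivalence in both directions.
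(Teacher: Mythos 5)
Your proposal is correct and follows essentially the same route as the paper: both reduce disturbance decoupling to the block condition $(W^k)_{ji}=0$ for $0\le k<n$ via Proposition~\ref{prop:disturbanceRejection}(ii), and both prove by induction (using $(W^{k+1})_{ji}=\sum_{q}(W^k)_{jq}W_{qi}$, i.e.\ prepending the first edge $(i,q)$) that $(W^k)_{ji}$ equals the sum of path weights over $\mc{P}^k_{ij}$. Your write-up is in fact slightly more explicit than the paper's on two points---verifying that the $k=0$ term is vacuous because $j\neq i$, and checking the ordering of block factors against matrix multiplication---but these are refinements of the same argument, not a different one.
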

\begin{proof}
The result follows from equivalence between Proposition~\ref{prop:disturbanceRejection} condition $(ii)$ and \eqref{eqn:pathCondition}.
Note that $x \in \mc{S}(i)$ is equivalent to $x_\ell = 0$ for all $\ell \neq i$, and  $ W^k x \in \mc{S}(j)^\perp$ is equivalent to $(W^k x)_j = 0$ for all $n > k \geq 0$. We prove the result by induction. For $k = 0$, $(W^0x)_j = 0$ $\forall \ x \in \mc{S}(i)$ holds if and only if $i \neq j$. For $k > 0$, $(W^kx)_j = 0$ $\forall \ x \in \mc{S}(i)$ is equivalent to  $i \neq j$ and $(W^k)_{ji} = 0$.
% \todo[inline]{Should tis N be n?}
% \todo[inline]{I think if you mean the $N$ in $i,j\in [N]$ it should be $N$ since $i,j$ index players}
Suppose that for $i, j \in [N]$, $(W^{k})_{ji}$ is the sum of path weights over all paths of length $k$, originating at $i$ and ending at $j$, then $(W^{k+1})_{ji}$ is the sum of path weights over all paths of length $k+1$, originating at $i$ and ending at $j$. Let $W^{k} = M$, then $(W^{k+1})_{ji} = \sum_{q \in [N]} M_{jq}W_{qi}$, where $M_{jq}W_{qi}$ is the sum of path weights over all paths of length $k+1$ from $i$ to $j$ each of which contains $v_{1} = q$. Since we sum over $q \in [N]$, we conclude that $(W^{k+1})_{ji}$ is the sum of all paths weights of length $k+1$ from $i$ to $j$, i.e., $(i, v_1,\ldots, v_{k}, j) \in \mc{P}_{ij}^{k+1}$.
% From Proposition~\ref{prop:disturbanceRejection}, $W^k x \in \mc{S}(j)^\perp$ for all $x \in \mc{S}(i)$ and $k \in [N-1]$. Note that $x \in \mc{S}(i)$ is equivalent to $x_j = 0$ for all $j \neq i$, and $x_i \in \reals^{n_i}$. Furthermore, $ y = [y_1, \ldots, y_N] \in \mc{S}(j)^\perp$ is equivalent to $y_j = 0 $. Therefore, an equivalent condition~\eqref{eqn:algebraicDD} is $(W^k)_{ij} = 0, \ \forall \ k \in [N-1]$. We claim this is equivalent to~\eqref{eqn:pathCondition}. Suppose that for all $i, j \in [N]$, $(W^{k-1})_{ij}$ is the sum of path weights over all paths of length $k-1$, originating at $i$ and ending at $j$. Then $(W^{k})_{ij}$ is the sum of path weights over all paths of length $k$, originating at $i$ and ending at $j$. Let $W^{k-1} = M$, then $(W^k)_{ij} = \sum_{q \in [N]} M_{iq}W_{qj}$, where $M_{iq}W_{qj} \neq 0$ if and only the sum of path weights of length $k- 1$ from $i$ to $q$ is nonzero and there is an edge from $q$ to $j$. Furthermore, $M_{iq}W_{qj}$  sums over all path weights of length $k$ from $i$ to $j$ and passes through $q$ on the $k-1^{th}$ node. Since all paths from $i$ to $j$ must pass through a node $q \in [N]$ for its $(k-1)^{th}$ vertex, we conclude that $(W^k)_{ij}$ is the sum of all paths weights of length $k$ from $i$ to $j$. Finally, since $W_{ij}$ is the path weight of path $(i,j)$ with length $k =1$, induction holds. 
\end{proof}
The concept of disturbance decoupling is quite counter-intuitive: any change in player $i$'s action does not affect player $j$'s action, despite $f_j$ being implicitly dependent on $x_i$ through the network of player cost functions. As we see from the proof of Theorem~\ref{thm:disturbanceDecoupling}, this situation arises when the dependencies `cancel' each other out, i.e. the sum of path weights from $i$ to $j$ is always zero for equally lengthed paths. 
% Furthermore, Theorem~\ref{thm:disturbanceDecoupling} implies that if the sum of path weights is zero for the paths of length $k  < N$, it will be zero for all path lengths $k \in \mathbb{N}$.  

% \tc{red}{As shown in Section~\ref{sec:quadGame_subClasses}, the path weights are derived from game costs. Therefore, disturbance decoupling can be achieved through intentional design of game costs.}
\begin{example}[Disturbance decoupled players]\label{ex:DD_4Players}
Consider a $4$ player quadratic game where $x_i \in \reals$ and the game graph is given by Figure~\ref{fig:DD_example}. Edge weights $\alpha$, $\beta$, $\gamma$, and $\delta \in \reals$, while each self loop has weight $w_i > 0$. Paths of length $k \leq 4$ from player $1$ to player $4$ are enumerated as $\mc{P}_{14}^1 = \{\emptyset \}$, $\mc{P}_{14}^2 = \{(1,2,4), (1,3,4)\}$, and $\mc{P}_{14}^3 = \{ (1,1,2,4), (1,1,3,4), (1,2,2,4), (1,3,3,4)$ , $(1,2,4,4), (1,3,4,4) \}$. 
% \begin{align}\notag
%     p_{21} &:  \alpha\gamma, & p_{22} & :  \beta\delta, & p_{31} &: 1\alpha\gamma, & p_{32} & : 1\beta\delta, \\\notag
%     p_{33} &: \alpha1\gamma, & p_{34} & : \beta1\delta, &  p_{34} &: \alpha\gamma1, & p_{35} & : \beta\delta1 .
% \end{align}
To satisfy Theorem~\ref{thm:disturbanceDecoupling}, the sum of path weights for each $\mc{P}_{14}^k$ must be $0$ for $0 < k < 4$.
% If player $1$ has noise in its gradient, then player $4$ is disturbance decoupled if and only if all paths of lengths $k < 4$ from player $4$ to player $1$ satisfies Theorem~\ref{thm:disturbanceDecoupling}. 
% Namely, that 
% the two path weights of $\mc{P}_{14}^2$ must sum to zero and the six path weights of $\mc{P}_{14}^3$  must also sum to zero. 
% In this example, a summation 
There are no paths of length one, summation for $k= 2$ implies the criteria $\alpha\gamma + \beta\delta = 0$, and summation for $k = 3$ implies the criteria $(w_1 + w_2 + w_4)\alpha\gamma + (w_1 + w_3 + w_4)\beta\delta = 0$. If $w_2 = w_3$, $\alpha\gamma + \beta\delta = 0$ is necessary and sufficient for disturbance decoupling between player 1 and player 4. 
\end{example}
\begin{figure}
    \centering
    \includegraphics[width = 0.31\columnwidth]{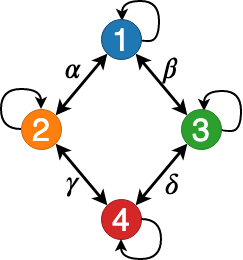}
    \caption{A simple game graph between four players}
    \label{fig:DD_example}
    \vspace{-10pt}
\end{figure}
% Theorem~\ref{thm:disturbanceDecoupling} allows us to reason about different cost structures.
%\todo{Edited!}
\begin{remark}
Disturbance decoupling  is a structural property of the game  in terms of disturbance propagation and attenuation. An open research problem is linking this structural property to  robust decision making under   uncertainties in cost parameters $P_i$, $P_{ij}$ and step sizes $\gamma_i$.
\end{remark}
The following corollary specializes to the class of potential games~\cite{monderer1996potential}, which  arise in many  applications~\cite{paccagnan,lutati2014congestion,alpcan2010network}.
\begin{corollary}\label{cor:DD_zerosum_potential} 
Consider an $N$-player quadratic potential game under learning dynamics as given by~\eqref{eqn:gradientUpdate}, %by~\eqref{eqn:gameGraph_dynamics}, 
where player $i$ experiences gradient disturbance as given by~\eqref{eqn:noiseModel}. Player $i$ is disturbance decoupled from player $j \neq i$ if and only if player $j$ is also disturbance decoupled from player $i$. 
\end{corollary}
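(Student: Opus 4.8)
The plan is to reduce the statement to a symmetry property of the powers of the game-graph adjacency matrix $W$. By Theorem~\ref{thm:disturbanceDecoupling}, together with the block-entry interpretation established in its proof, player $j$ is disturbance decoupled from player $i$ exactly when the $(j,i)$ block of $W^k$ vanishes for every $0 < k < n$; symmetrically, player $i$ is decoupled from player $j$ exactly when the $(i,j)$ block of $W^k$ vanishes over the same range of $k$. Thus the corollary is equivalent to the claim that, in a quadratic potential game, $(W^k)_{ji}=0$ if and only if $(W^k)_{ij}=0$ for each $k$. I would establish this by exhibiting a rescaling that renders $W$ symmetric.

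First I would record the algebraic consequence of the potential structure. Collecting the cost Hessian blocks into a matrix $P$ with diagonal blocks $P_{ii}=P_i$ and off-diagonal blocks $P_{ij}$, the undisturbed learning map is $W = I - \Gamma P$ with $\Gamma = \diag(\gamma_1 I_{n_1}, \ldots, \gamma_N I_{n_N})$. Each $P_i$ may be taken symmetric, since a quadratic form depends only on the symmetric part of its Hessian, and the potential condition $P_{ij}=P_{ji}^\top$ pairs the off-diagonal blocks by transposition; hence $P$ is symmetric.

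Next, because $\Gamma$ is diagonal and positive definite, I would form $\tilde{W} := \Gamma^{-1/2} W \Gamma^{1/2} = I - \Gamma^{1/2} P \Gamma^{1/2}$, which is symmetric as $\Gamma^{1/2} P \Gamma^{1/2}$ is a congruence of the symmetric matrix $P$. Then $W^k = \Gamma^{1/2}\tilde{W}^k \Gamma^{-1/2}$, so blockwise $(W^k)_{ji} = \gamma_j^{1/2}\,(\tilde{W}^k)_{ji}\,\gamma_i^{-1/2}$. Since $\gamma_i,\gamma_j>0$, the block $(W^k)_{ji}$ vanishes if and only if $(\tilde{W}^k)_{ji}$ does. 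But $\tilde{W}^k$ is symmetric, so $(\tilde{W}^k)_{ji} = (\tilde{W}^k)_{ij}^\top$, giving $(\tilde{W}^k)_{ji}=0 \iff (\tilde{W}^k)_{ij}=0 \iff (W^k)_{ij}=0$. Chaining these equivalences across all $0 < k < n$ yields the desired biconditional and hence the corollary.

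The only real subtlety—and the step I expect to be the main obstacle to state cleanly—is that $W$ itself is \emph{not} symmetric, because the per-player step sizes $\gamma_i$ break the symmetry of $P$ through the asymmetric left factor $\Gamma$. The resolution is precisely the congruence/similarity rescaling by $\Gamma^{\pm 1/2}$: the congruence preserves symmetry of $P$, while the positive diagonal similarity preserves the pattern of vanishing blocks of the powers. Everything else, namely the block bookkeeping and the appeal to Theorem~\ref{thm:disturbanceDecoupling}, is routine.
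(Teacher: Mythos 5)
Your proof is correct, but it takes a genuinely different route from the paper. The paper stays inside the graph-theoretic framework of Theorem~\ref{thm:disturbanceDecoupling}: using the potential-game relation $W_{ij} = \frac{\gamma_i}{\gamma_j}W_{ji}^\top$, it pairs each path $p=(i,v_1,\ldots,v_{k-1},j)$ with its reversal and shows the transposed path weight of $p$ equals $\frac{\gamma_j}{\gamma_i}$ times the weight of the reversed path, so the summed path weights in \eqref{eqn:pathCondition} from $i$ to $j$ vanish iff those from $j$ to $i$ do. You instead work globally with the matrix $W = I - \Gamma P$: after the (correct, and correctly flagged) normalization that $P$ may be taken symmetric in a potential game, the similarity $\tilde{W} = \Gamma^{-1/2}W\Gamma^{1/2} = I - \Gamma^{1/2}P\Gamma^{1/2}$ is symmetric, powers of a symmetric matrix are symmetric, and the positive block-scalar structure of $\Gamma^{1/2}$ (each block $\gamma_i^{1/2}I_{n_i}$) preserves the vanishing pattern of blocks, giving $(W^k)_{ji}=0 \iff (W^k)_{ij}=0$ for all $0<k<n$; combined with the identification of $(W^k)_{ji}$ as the summed path weights from the proof of Theorem~\ref{thm:disturbanceDecoupling}, this is exactly the corollary. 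Your argument in fact yields the quantitative identity $(W^k)_{ij} = \frac{\gamma_i}{\gamma_j}\bigl((W^k)_{ji}\bigr)^\top$ for all $k$, which is the matrix-power analogue of the paper's pathwise identity, and it avoids all path bookkeeping---it would go through verbatim at the level of Proposition~\ref{prop:disturbanceRejection}'s algebraic condition even without the graph machinery. What the paper's pathwise version buys in exchange is finer-grained structural insight (the correspondence between a path and its reversal, with the explicit scaling $\frac{\gamma_j}{\gamma_i}$), which is the perspective the rest of the paper uses for incentive and interaction design.
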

\begin{proof}
In a potential game graph, $W_{ij} = \frac{\gamma_i}{\gamma_j} W_{ji}^\top$.
Therefore, a path $p$ with path weight $W_{j,v_{k-1}}\ldots W_{v_{1},i}$ is equivalent to
\begin{align*}
    &\frac{\gamma_j}{\gamma_{v_{k-1}}} W_{v_{k-1}j}^\top \frac{\gamma_{v_{k-1}}}{\gamma_{v_{k-2}}} W_{v_{k-2}v_{k-1}}^\top \ldots \frac{\gamma_{v_{1}}}{\gamma_i} W_{i,v_{1}}^\top \\
    =& \frac{\gamma_j}{\gamma_{i}} W_{i,v_{1}}\ldots W_{v_{k-1}j},
\end{align*}
where $\frac{\gamma_j}{\gamma_{i}}$ scales all paths weights from $i$ to $j$. Since $\gamma_{j}, \gamma_i > 0$, $\frac{\gamma_j}{\gamma_{i}} > 0$. Therefore,~\eqref{eqn:pathCondition} holds from player $i$ to player $j$ if and only if it holds from player $j$ to player $i$.
\end{proof}
 
% We apply Theorem~\ref{thm:disturbanceDecoupling} to finite horizon LQ games to derive a necessary condition on the controllable and unobservable subspaces of player $i$ and $j$ for disturbance decoupling.
%\todo[inline]{I think if we are going to state at the beginning of the previous results that the game definition is in Def 2 and refernece the dynamics we should do this in all the results so they are self contained and consistent.}
\begin{corollary}\label{cor:lqGame_DD}
Consider an $N$-player finite horizon LQ game as in~\eqref{eqn:LQGame_dynamics} under learning dynamics as given by~\eqref{eqn:lqGame_update}, where player $i$ experiences gradient disturbance as given by~\eqref{eqn:noiseModel}, if disturbance decoupling holds between player $j$ and  gradient disturbance from player $i$, then % the controllable subspace of player $i$ and the unobservable subspace of $(B_j^\top, A^\top)$ satisfy
\begin{equation}\label{eqn:lqGame_DDCondition}
\textstyle\begin{bmatrix}
B_j^\top \\
\vdots \\
B_j^\top (A^\top)^{T-1}
\end{bmatrix}Q_j\begin{bmatrix}B_i & \cdots & A^{T-1}B_i\end{bmatrix} = 0.
\end{equation}
If $Q_j$ is positive definite  and $T\! \geq \! m$, the controllable subspace of $(\tilde{A},\tilde{B}_i)$ must lie in the unobservable subspace of $(\tilde{B}_j^\top, \tilde{A}^\top)$ where $\textstyle \tilde{A}=Q_j^{1/2}AQ_j^{-1/2} $, $\tilde{B}_i = Q_j^{1/2} B_i$, and $\tilde{B}_j = Q_j^{1/2} B_j$.
\end{corollary}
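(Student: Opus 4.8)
The plan is to show that the hypothesized decoupling forces the single off-diagonal block $W_{ji}$ of the LQ learning matrix to vanish, and then to unpack this block equation into the termwise identity~\eqref{eqn:lqGame_DDCondition} using the Toeplitz structure of the $G_i$. First I would invoke Theorem~\ref{thm:disturbanceDecoupling} (equivalently condition $(ii)$ of Proposition~\ref{prop:disturbanceRejection}) at path length $k=1$: the only length-one path from $i$ to $j$ is the edge $(i,j)$, so decoupling requires $W_{ji}=0$. Reading off the learning matrix from~\eqref{eqn:lqGame_update}, $W_{ji}=-M_{ji}=-\gamma_j G_j^\top \bar{Q}_j G_i$, and since $\gamma_j>0$ this is exactly $G_j^\top \bar{Q}_j G_i = 0$. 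Note that this uses only the $k=1$ consequence of decoupling, consistent with the corollary being a one-directional (necessary) statement.

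Next I would expand this block equation entrywise. Using the lower-triangular Toeplitz form of $G_i$ in~\eqref{eqn:LQ_game_parameters} together with $\bar{Q}_j=\diag\{Q_j,\ldots,Q_j\}$, the $(\sigma,\tau)$ control-time block of $G_j^\top \bar{Q}_j G_i$ equals $\sum_{t=\max(\sigma,\tau)+1}^{T} B_j^\top (A^\top)^{t-\sigma-1} Q_j A^{t-\tau-1} B_i$. Writing $F(a,b):=B_j^\top (A^\top)^a Q_j A^b B_i$, each block is a sum of the $F(a,b)$ along a fixed offset $a-b=\tau-\sigma$, while the target~\eqref{eqn:lqGame_DDCondition} is precisely $F(a,b)=0$ for all $0\le a,b\le T-1$.

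The crux, and the step I expect to be the main obstacle, is extracting the individual terms $F(a,b)$ from these block sums, since $G_j^\top \bar{Q}_j G_i=0$ a priori only annihilates sums. I would do this by telescoping: fixing the offset $d=\tau-\sigma\ge 0$ and letting $\sigma$ vary, consecutive (vanishing) block equations differ by exactly one boundary term, so their difference peels off $F$ at the extreme index, and the single-term block with $\max(\sigma,\tau)=T-1$ furnishes the base case; this yields $F(a,b)=0$ for all $a\ge b$ with $a\le T-1$, and the symmetric argument with $d<0$ covers $a<b$. Assembling the $F(a,b)$ into a single matrix then gives~\eqref{eqn:lqGame_DDCondition}.

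For the geometric conclusion I would assume $Q_j\succ 0$, set $\tilde{A}=Q_j^{1/2}AQ_j^{-1/2}$, $\tilde{B}_i=Q_j^{1/2}B_i$, $\tilde{B}_j=Q_j^{1/2}B_j$, and verify by cancelling the $Q_j^{\pm 1/2}$ factors (using that similarity preserves powers, so $\tilde{A}^b=Q_j^{1/2}A^bQ_j^{-1/2}$ and $(\tilde{A}^\top)^a=Q_j^{-1/2}(A^\top)^aQ_j^{1/2}$) that $\tilde{B}_j^\top (\tilde{A}^\top)^a \tilde{A}^b \tilde{B}_i = B_j^\top (A^\top)^a Q_j A^b B_i$. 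Hence~\eqref{eqn:lqGame_DDCondition} states exactly that the product of the observability matrix of $(\tilde{B}_j^\top,\tilde{A}^\top)$ with the controllability matrix of $(\tilde{A},\tilde{B}_i)$ vanishes, i.e. $\range$ of the latter lies in $\ker$ of the former; the assumption $T\ge m$ guarantees that the powers $0,\ldots,T-1$ already contain the first $m$ powers, so by Cayley--Hamilton the standard $m$-block controllable and unobservable subspaces are captured, giving the stated containment.
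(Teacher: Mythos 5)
Your proposal is correct and follows essentially the same route as the paper's proof: both reduce decoupling to the vanishing of the single edge weight $W_{ji}=-\gamma_j G_j^\top\bar{Q}_jG_i$, expand this product blockwise using the Toeplitz structure of $G_i,G_j$, and peel off the individual terms $B_j^\top(A^\top)^aQ_jA^bB_i$ by differencing blocks along fixed offsets (the paper's ``unwrapping'' from the corner $(T-1,T-1)$ is exactly your telescoping). Your write-up is in fact slightly more careful than the paper's---you give the correct summation limits where the paper has an apparent $\min$/$\max$ index typo, and you spell out the similarity-transformation and Cayley--Hamilton step for the controllable/unobservable subspace claim, which the paper leaves implicit.
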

\begin{proof}
For player $j$ to be disturbance decoupled from player $i$, edge $(i,j)$ cannot exist, i.e. $-\gamma_jG_j^\top \bar{Q}_jG_i = 0$ from~\eqref{eqn:LQ_game_parameters}. Expanding $G_j^\top \bar{Q}_jG_i$ $= M \in \reals^{n_j \times n_i}$, $M_{pq} \in \reals^{m_j\times m_i}$ is given by 
${\sum^{T-1}_{t = \min\{p, q\}} } B_j^\top (A^\top)^{t- p} Q_j A^{t - q} B_i$.
We unwrap these conditions starting from $p = T-1$, $q = T-1$; in this case $M_{pq} = B_j^\top Q_j B_i = 0$ is necessary. Then we consider $M_{T-2, T -2}$ $= B_j^\top A^\top Q_j A B_i + B_j^\top Q_j B_i = 0$, which implies that $ B_j^\top A^\top Q_j A B_i $ is necessary. Subsequently, this implies that all $ B_j^\top (A^\top)^t Q_j A^t B_i = 0$ is necessary for $t \in [0, T)$. Similarly, we note that $M_{T-1, q} = B_j^\top Q_j A^{q}B_i = 0$ and $M_{p, T-1} = B_j^\top (A^\top)^{p} Q_j B_i = 0$. From these we can use the rest of $M$ to conclude that $B_j^\top (A^\top)^{p} Q_j A^{q} B_i = 0$ for any $p, q \in [0, T)$. This condition is equivalent to~\eqref{eqn:lqGame_DDCondition}.
\end{proof}
We apply Theorem~\ref{thm:disturbanceDecoupling} to two player bilinear games and prove a necessary condition for disturbance decoupling between different coordinates of each player's action space that is \emph{independent} of players' step sizes.
\begin{corollary}\label{cor:2pZeroSum}
Consider a two player bilinear game under learning dynamics~\eqref{eqn:simGrad_update} and~\eqref{eqn:algGrad_update}, where coordinates $x_{1,i}$ and $x_{2,i}$ experience gradient disturbance as given by~\eqref{eqn:noiseModel}. If $j \neq i$ and coordinate $x_{1,j}$ is disturbance decoupled from coordinate $x_{1,i}$, $(A,B)$ must satisfy $\sum_{\ell = 1}^{n_2} b_{\ell i} a_{j\ell}= 0$, 
where $a_{pq}$ and $b_{pq}$ denote the $(p, q)^{th}$ elements of $A$ and $B$, respectively. Similarly, if $j \neq i$ and coordinate $x_{2,j}$ is disturbance decoupled from coordinate $x_{2,i}$, $(A,B)$ must satisfy $\sum_{\ell = 1}^{n_1} b_{j \ell} a_{\ell i}= 0$.
\end{corollary}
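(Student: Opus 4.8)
The plan is to apply Theorem~\ref{thm:disturbanceDecoupling} to the \emph{coordinate-level} game graph, in which every scalar coordinate $x_{1,p}$ and $x_{2,p}$ is its own node and the adjacency matrix is $W_s$ (resp.\ $W_a$) read entrywise rather than blockwise. For a source coordinate $s$ and target coordinate $t$, disturbance decoupling is, by the proof of Theorem~\ref{thm:disturbanceDecoupling}, equivalent to $(W^k)_{ts}=0$ for every $0<k<n$, where $n=n_1+n_2$. Since only a \emph{necessary} condition is sought, it suffices to extract a single such equation, and I would choose the shortest path length that produces a nontrivial constraint.

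For the first claim, take the source to be coordinate $i$ of player~$1$ and the target coordinate $j\neq i$ of player~$1$; both live in the first block. In both $W_s$ and $W_a$ the $(1,1)$ block is $I$ and the off-diagonal blocks are the common $-\gamma_1A$ and $-\gamma_2 B$, so there is no length-$1$ path (the $(1,1)$ off-diagonal entries vanish) and every length-$2$ path from $i$ to $j$ must pass through a player-$2$ coordinate $\ell$. Such a path has weight $(-\gamma_1 a_{j\ell})(-\gamma_2 b_{\ell i})$, so summing over $\ell$ gives $(W^2)_{ji}=\gamma_1\gamma_2\sum_{\ell=1}^{n_2} a_{j\ell}b_{\ell i}$; the length-$2$ condition together with $\gamma_1\gamma_2>0$ yields exactly $\sum_{\ell=1}^{n_2} b_{\ell i}a_{j\ell}=0$.

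For the second claim, with source and target coordinates $i$ and $j\neq i$ of player~$2$, the argument splits according to the dynamics, which is the one place requiring care. Under simultaneous play the $(2,2)$ block of $W_s$ is $I$, so again there is no direct edge and the shortest contributing paths have length $2$, now routed through a player-$1$ coordinate $\ell$ with weight $(-\gamma_2 b_{j\ell})(-\gamma_1 a_{\ell i})$; summation gives $(W_s^2)_{n_1+j,\,n_1+i}=\gamma_1\gamma_2\sum_{\ell=1}^{n_1} b_{j\ell}a_{\ell i}$ and hence the stated identity. Under alternating play, however, the $(2,2)$ block of $W_a$ is $I+\gamma_1\gamma_2 BA$, which has nonzero off-diagonal entries, so a length-$1$ edge between distinct player-$2$ coordinates already exists, with weight $\gamma_1\gamma_2 (BA)_{ji}=\gamma_1\gamma_2\sum_{\ell=1}^{n_1} b_{j\ell}a_{\ell i}$, and the $k=1$ decoupling condition immediately forces $\sum_{\ell=1}^{n_1} b_{j\ell}a_{\ell i}=0$.

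The main obstacle is more a bookkeeping hazard than a genuine difficulty: one must stay consistent with the edge-direction convention (edge $(j,i)$ carries weight $W_{ij}$) and verify that no shorter or competing paths contribute. In particular, for the first claim one checks that the new intra-block-$2$ edges of $W_a$ cannot enter a length-$2$ path between two player-$1$ coordinates, and for the second claim one notices that the relevant shortest path length drops from $2$ to $1$ precisely because of the $\gamma_1\gamma_2 BA$ term. Since only a necessary condition is required, isolating one path length per case is enough, and the positive factors $\gamma_1\gamma_2$ cancel, delivering the claimed step-size-independent constraints on $(A,B)$.
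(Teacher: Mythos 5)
Your proposal is correct and follows essentially the same route as the paper's own proof: treat each scalar coordinate as a node of an $(n_1+n_2)$-player game graph with adjacency $W_s$ or $W_a$, invoke Theorem~\ref{thm:disturbanceDecoupling}, and extract the shortest nontrivial path-length condition---the $k=2$ sum $\gamma_1\gamma_2\sum_{\ell} a_{j\ell}b_{\ell i}=0$ (resp.\ $\gamma_1\gamma_2\sum_{\ell} b_{j\ell}a_{\ell i}=0$) for same-block coordinates, and the direct length-$1$ edge of weight $\gamma_1\gamma_2(BA)_{ji}$ created by the $(2,2)$ block of $W_a$ in the alternating case. Your explicit check that the intra-block-$2$ edges of $W_a$ cannot enter a length-$2$ path between player-$1$ coordinates is a bookkeeping detail the paper leaves implicit, but the argument is otherwise the same.
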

\begin{proof}
We construct games played by $n_1 + n_2$ players with actions $\{x_{1,1}, \ldots, x_{1, n_1},x_{2,1}, \ldots, x_{2, n_2}\}$ and whose game graphs are identical to $W_s$~\eqref{eqn:simGrad_gameGraph} and $W_a$~\eqref{eqn:altGrad_gameGraph}. First consider disturbance decoupling of $x_{1,j}$ from $x_{1,i}$. In both learning dynamics, $\{x_{1,1}, \ldots, x_{1, n_1}\}$ do not have any edges between players. Therefore, paths between $x_{1,i}$ and $x_{1,j}$ with length $2$ is given by $\mc{P} = \{(x_{1,i}, x_{2,\ell}, x_{1,j}) \ | \ \ell \in [n_2]\}$. We sum path weights over $\mc{P}$ to obtain $ \sum_{\ell = 1}^{n_2} b_{\ell i} a_{j\ell}= 0$ for disturbance decoupling of $x_{1,j}$ from $x_{1,i}$  in~\eqref{eqn:simGrad_update} and~\eqref{eqn:algGrad_update}. A similar argument follows for disturbance decoupling of $x_{2,j}$ from $x_{2,i}$ in~\eqref{eqn:simGrad_update}. 
For disturbance decoupling of $x_{2,j}$ from $x_{2,i}$ in~\eqref{eqn:algGrad_update}, we note that a edge from $x_{2,i}$ to $x_{2,j}$ exists with weight $\gamma_1\gamma_2(BA)_{ji}$ when $j \neq i$. Disturbance decoupling requires $\gamma_1\gamma_2(BA)_{ji}=0$, therefore  $\sum_{\ell = 1}^{n_1} b_{j \ell} a_{\ell i} \!=\! 0$.
\end{proof}
\begin{corollary}Consider a two player bilinear game under learning dynamics~\eqref{eqn:simGrad_update} and~\eqref{eqn:algGrad_update}, where coordinates $x_{1,i}$ and $x_{2,i}$ experience gradient disturbance as given by~\eqref{eqn:noiseModel}. If coordinate $x_{2,j}$ is disturbance decoupled from coordinate $x_{1,i}$, $(A,B)$ must satisfy $b_{ji} = 0$ and $\sum^{n_2}_{q=1}b_{qi}\sum_{\ell=1}^{n_1}a_{\ell q}b_{j\ell} = 0$,
% \begin{equation}\label{eqn:xiyj_dd}
%     b_{ji} = 0, \quad \textstyle\sum^{n_2}_{q=1}b_{qi}\sum_{\ell=1}^{n_1}a_{\ell q}b_{j\ell} = 0,
%     \end{equation}
where $a_{pq}$ and $b_{pq}$ denote the $(p,q)^{th}$ elements of $A$ and $B$, respectively. If coordinate $x_{1,j}$ is disturbance decoupled from coordinate $x_{2,i}$, $(A,B)$  must satisfy $a_{ji} = 0$ and $\sum^{n_1}_{q=1}a_{qi}\sum_{\ell=1}^{n_2}b_{\ell q}a_{j \ell} \!=\! 0$.
% \begin{equation}\label{eqn:yixj_dd}
%     a_{ji} = 0, \quad \textstyle\sum^{n_1}_{q=1}a_{qi}\sum_{\ell=1}^{n_2}b_{\ell q}a_{j \ell} = 0.
% \end{equation}
\end{corollary}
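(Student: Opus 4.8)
The plan is to reuse the expanded $(n_1+n_2)$-node game graph constructed in the proof of Corollary~\ref{cor:2pZeroSum}, whose nodes are the scalar coordinates $\{x_{1,1},\ldots,x_{1,n_1},x_{2,1},\ldots,x_{2,n_2}\}$ and whose adjacency matrix is $W_s$ from \eqref{eqn:simGrad_gameGraph} for the dynamics \eqref{eqn:simGrad_update} and $W_a$ from \eqref{eqn:altGrad_gameGraph} for the dynamics \eqref{eqn:algGrad_update}. First I would record the edge weights read off from these matrices: the only edge into a player-$2$ coordinate $x_{2,p}$ from a player-$1$ coordinate $x_{1,q}$ has weight $-\gamma_2 b_{pq}$; the only edge into $x_{1,p}$ from $x_{2,\ell}$ has weight $-\gamma_1 a_{p\ell}$; within player $1$ the block is $I$, so the sole intra-player-$1$ edges are self-loops of weight one; within player $2$ the block is $I$ for simultaneous play (self-loops only) but $I+\gamma_1\gamma_2 BA$ for alternating play, so $x_{2,\ell}\to x_{2,p}$ then carries weight $\delta_{p\ell}+\gamma_1\gamma_2(BA)_{p\ell}$. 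With this in hand, Theorem~\ref{thm:disturbanceDecoupling} reduces decoupling of $x_{2,j}$ from $x_{1,i}$ to the requirement that the sum of path weights over all paths from $x_{1,i}$ to $x_{2,j}$ of each fixed length $0<k<n$ vanishes, where $n=n_1+n_2$; the case $k=0$ holds trivially because the two coordinates are distinct nodes.

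For the first condition I would examine $k=1$: there is a single length-$1$ path, the direct edge $x_{1,i}\to x_{2,j}$, whose weight is $-\gamma_2 b_{ji}$ in both $W_s$ and $W_a$, so since $\gamma_2>0$ its vanishing forces $b_{ji}=0$. I then use this fact while extracting the cubic condition. For simultaneous play I would enumerate the length-$3$ paths, classifying them by the player-type pattern of their two interior nodes: every pattern other than the fully alternating one $x_{1,i}\to x_{2,c}\to x_{1,e}\to x_{2,j}$ contains an intra-player edge, which is necessarily a self-loop of weight one and therefore pins an index so that the path weight is a scalar multiple of $b_{ji}=0$; only the alternating pattern survives, contributing $\sum_{c,e}(-\gamma_2 b_{ci})(-\gamma_1 a_{ec})(-\gamma_2 b_{je})=-\gamma_1\gamma_2^2\sum_{q=1}^{n_2} b_{qi}\sum_{\ell=1}^{n_1} a_{\ell q} b_{j\ell}$, i.e. $-\gamma_1\gamma_2^2(BAB)_{ji}$. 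For alternating play the same quantity appears one length earlier: at $k=2$ the path $x_{1,i}\to x_{2,c}\to x_{2,j}$ has a second edge carrying the off-diagonal weight $\gamma_1\gamma_2(BA)_{jc}$, so the length-$2$ sum equals $-\gamma_2 b_{ji}-\gamma_1\gamma_2^2\sum_c b_{ci}(BA)_{jc}=-\gamma_1\gamma_2^2(BAB)_{ji}$ once $b_{ji}=0$ is used. In either case, dividing by $\gamma_1\gamma_2^2>0$ yields the step-size-independent identity $\sum_{q=1}^{n_2} b_{qi}\sum_{\ell=1}^{n_1} a_{\ell q} b_{j\ell}=0$.

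The decoupling of $x_{1,j}$ from $x_{2,i}$ is handled by an analogous enumeration of paths from $x_{2,i}$ to $x_{1,j}$: the length-$1$ edge $x_{2,i}\to x_{1,j}$ has weight $-\gamma_1 a_{ji}$ (identical in $W_s$ and $W_a$), giving $a_{ji}=0$, while the fully alternating length-$3$ path for simultaneous play, respectively the length-$2$ path $x_{2,i}\to x_{2,c}\to x_{1,j}$ for alternating play, reconstructs $(ABA)_{ji}=\sum_{q=1}^{n_1} a_{qi}\sum_{\ell=1}^{n_2} b_{\ell q} a_{j\ell}=0$. Here I would be explicit that although $W_a$ is \emph{not} symmetric under interchanging the two players (only the player-$2$ block is modified), the reconstruction still goes through: for this direction the modified block $I+\gamma_1\gamma_2 BA$ now enters through the \emph{first} edge of the length-$2$ path rather than the second, and the index relabeling still collapses it to $(ABA)_{ji}$.

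The main obstacle I anticipate is the bookkeeping in the path enumeration, and in particular confirming that the two learning rules produce identical necessary conditions despite having different game graphs: the nontrivial player-$2$ block $I+\gamma_1\gamma_2 BA$ of $W_a$ must reproduce exactly the cubic term $(BAB)_{ji}$ (resp.\ $(ABA)_{ji}$) that, for $W_s$, only emerges at path length three. I would also take care with the degenerate low-dimensional cases (e.g.\ $n_1=1$ or $n_2=1$), where the relevant length-$3$ sum may fall outside the range $0<k<n$; in those cases the cubic condition follows automatically once $b_{ji}=0$ (resp.\ $a_{ji}=0$) is imposed, so the stated necessary conditions still hold.
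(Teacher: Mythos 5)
Your proposal is correct and follows essentially the same route as the paper's proof: build the $(n_1+n_2)$-node coordinate-level game graph with adjacency $W_s$ or $W_a$, force $b_{ji}=0$ (resp.\ $a_{ji}=0$) from the direct edge, and recover the cubic condition $(BAB)_{ji}=0$ (resp.\ $(ABA)_{ji}=0$) from the length-$3$ alternating paths under simultaneous play and the length-$2$ paths through the $I+\gamma_1\gamma_2 BA$ block under alternating play, with self-loop paths dispatched exactly as in the paper. One trivially harmless slip: in your alternating-play length-$2$ sum you omit the path $x_{1,i}\to x_{1,i}\to x_{2,j}$ (self-loop first), which contributes a second $-\gamma_2 b_{ji}$ term, but this vanishes by the already-established $b_{ji}=0$, so the conclusion is unaffected; your added care about the degenerate cases $n_1+n_2\leq 3$ is a point the paper does not address.
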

\begin{proof}
We construct games played by $n_1 + n_2$ players with actions $\{x_{1,1}, \ldots, x_{1, n_1},x_{2,1}, \ldots, x_{2, n_2}\}$ and whose game graphs are identical to $W_s$~\eqref{eqn:simGrad_gameGraph} and $W_a$~\eqref{eqn:altGrad_gameGraph}.
In both learning dynamics, disturbance decoupling requires no direct path between the decoupled players. Therefore $a_{ji} = 0$ or $b_{ji} = 0$. 

Consider disturbance decoupling of $x_{1,j}$ from $x_{2,i}$ in~\eqref{eqn:simGrad_update}, paths of length $3$ from $x_{2,i}$ to $x_{1,j}$ without self loops is given by $\mc{P} = \{ (x_{2,i}, x_{1,q}, x_{2,\ell}, x_{1,j})\  | \ q \in [n_1]$, $\ell \in [n_2] \}$. A path of length $3$ with self loops must also include $(x_{2,i}, x_{1,j})$, whose weight is $0$. We sum path weights over $ p \in \mc{P}$ to obtain $\sum^{n_1}_{q=1}a_{qi}\sum_{\ell=1}^{n_2}b_{\ell q}a_{j \ell} = 0$. A similar argument is made for disturbance decoupling of $x_{2,j}$ from $x_{1,i}$ in~\eqref{eqn:simGrad_update}. 

Consider disturbance decoupling of  $x_{2,j}$ from $x_{1,i}$ in~\eqref{eqn:algGrad_update}, paths of length $2$ from $x_{1,i}$ to $x_{2,j}$ without self loops is given by $\mc{Q} = \{(x_{1,i}, x_{2,q}, x_{2,j}) \ | \ q \in [n_2]\}$. A path of length $2$ with self loops must also include $(x_{1,i}, x_{2,j})$, whose weight is $0$. Weight of $(x_{2,q}, x_{2,j})$ is given by $\gamma_1\gamma_2(BA)_{jq}$ $= \gamma_1\gamma_2\sum_{\ell=1}^{n_1} b_{j\ell}a_{\ell q}$. We sum path weights over $p \in \mc{Q}$ to obtain $\textstyle\sum^{n_2}_{q=1}b_{qi}\sum_{\ell=1}^{n_1}a_{\ell q}b_{j\ell} = 0$. A similar argument is made for disturbance decoupling of $x_{1,j}$ from $x_{2,i}$ in~\eqref{eqn:algGrad_update}.
\end{proof}

\section{Numerical Example}\label{sec:sim}
We provide an example of disturbance decoupling in a LQ game. 
% We introduce also extend the quadratic game formulation to games with \emph{non-quadratic cost functions}, and show that when the game graph satisfies disturbance decoupling, disturbance decoupling between players still hold. 
% \subsection{Tug-of-War LQ game}
% We apply gradient-based learning to a tug-of-war game between four players. 
% \begin{figure}[ht]
%     \centering
%     \includegraphics[width=0.43\columnwidth]{}
%     \caption{Tug of War LQ game, each player pulls target $z$ towards goal $c_i$ by asserting control $u_i$ along direction $B_i$.}
%     \label{fig:tug_dynamics}
% \end{figure}
Consider a tug-of-war game in which a single target $z \in \reals^2$ is controlled by four players. We assume that player $i$ can move $z$ along vector $B_i \in \reals^2$ by $u_i \in \reals$, and that $z$ is stationary without any player input, i.e., $A = I$. Starting with a randomized initial condition $z^0$, at each step $t$, the target moves according to the dynamics $z^{t+1} = z^t + \sum_{i =1}^4B_iu^t_i$ where
 $B_1 = [1,0]^\top$, $B_2 = \textstyle[\frac{1}{\sqrt{2}}, \frac{1}{\sqrt{2}}]^\top$, $B_3 = \textstyle[\frac{-1}{\sqrt{2}}, \frac{1}{\sqrt{2}}]^\top$, $B_4 = \textstyle[0, 1]^\top$. 
% A visualization of the control vectors on target $z$ is given in Figure~\ref{fig:tug_dynamics}. 
% Collectively, the players have full control of $z$ in $\reals^2$. 
Each
player $i$'s cost function is given by
\begin{equation*}\label{eqn:tugCost}
\textstyle\tfrac{1}{2}\norm{z^{9} - c_i}_2^2 + \sum_{t = 0}^{8}\tfrac{1}{2}\norm{z^t - c_i}_2^2 + 10\norm{u^t_i}_2^2
\end{equation*}
which describes player $i$'s objective to move target $z$ towards
$c_i \in \reals^2$ in a finite time $T = 10$ by using minimal amount of control. 
By designing the game dynamics to satisfy Theorem~\ref{thm:disturbanceDecoupling}, we ensure that player $4$'s action is disturbance decoupled from player $1$'s.  

Using the equivalent formulation as described in Section~\ref{ex:lqGame_setUp}, 
%where $G_i, H, \bar{Q}, \bar{R}$ are given by~\eqref{eqn:LQ_game_parameters}. 
% then each player has cost function
% $f_i(U) = \half (\sum_{i = 1}^NG_iU_i + Hx^0 - C_i)^T\bar{Q}(\sum_{i = 1}^NG_iU_i + Hx^0 - C_i) + \half U_i^T\bar{R}U_i$.
$D_i f_i(U)$ $\textstyle = (G_i^\top\bar{Q}_iG_i + \bar{R}_i)U_i  + \sum_{j \neq i}G_i^\top\bar{Q}_i (G_jU_j  +Hz^0 - C_i) $ where $C_i = [c_i^\top, \ldots, c_i^\top]^\top$.
% \begin{align}
%      \nabla_{U_i} f_i(U) = \  &(G_i^T\bar{Q}G_i + \bar{R}_i)U_i \\
%      & + \sum_{j \neq i}G_i^T\bar{Q} (G_jU_j  +Hx^0 - C_i) 
% \end{align}
Hence, the learning dynamics are  $U^{k+1}  = W U^{k} + \Gamma\bar{Q}_i[G_1, \ldots, G_N]^\top[(Hz^0 - C_1)^\top, \ldots, (Hz^0 - C_N)^\top]^\top$, where $W_{ij} = G_i^\top\bar{Q}_iG_j = E \otimes B_i^\top B_j$
% \[G_i^T\bar{Q}G_j = \begin{bmatrix}
% T B_i^TB_j & (T-1)B_i^TB_j & \ldots & B_i^TB_j \\
% (T-1)B_i^TB_j & (T-1)B_i^TB_j & \ldots & B_i^TB_j \\
% \vdots & &\ddots & B_i^TB_j \\
% B_i^TB_j & & \hdots & B_i^TB_j\\
% \end{bmatrix},\]
with $B_1^\top B_2= B_1^\top B_3 $ $= B_2^\top B_4 = \frac{1}{\sqrt{2}}$,  $B_2^\top B_3 = B_1^\top B_4 = 0$, $B_3^\top B_4 = -\frac{1}{\sqrt{2}}$, $B_1^\top B_1 = B_2^\top B_2 = B_3^\top B_3 = B_4^\top B_4 = 1$, 
and \[\textstyle E = \begin{bmatrix}9 & 8 & 7 & \ldots &1\\
8& 8& 7  &\ldots & 1\\
7 & 7 & 7 &\ddots & 1\\
\vdots & & \ddots& & 1\\
1 & & \hdots & & 1\end{bmatrix} \in \reals^{9\times 9}.\]
% \textcolor{Green}{do not put matrices like this inline. also I dont understand what is in thsi matrix. where the dots are do you mean its 7 everywhere and then ones on the boundary or are you saing that the 9,8,7 srtucture keeps expanding? like is the next thing four 6's on the next row and column? }

To ensure convergence of the undisturbed learning dynamics~\cite{chasnov:2019aa}, we use uniform step sizes such that $\Gamma=\text{blkdiag}(\gamma_1 I, \ldots, \gamma_4I)$ with $\gamma_i = \frac{\sqrt{\alpha}}{\beta}$, where $\alpha = \lambda_{\min}\big(\frac{1}{4}(W + W^\top)^\top(W + W^\top)\big)$ and $\beta = \lambda_{\max}\big(W^\top W\big)$ with $\lambda_{\max}(\cdot)$ and $\lambda_{\min}(\cdot)$ denoting the   maximum and minimum eigenvalues of their arguments,  respectively. 
The associated game graph is given in Figure~\ref{fig:DD_example}, where $\alpha = \beta = \gamma  =\frac{1}{\sqrt{2}} E$ and $\delta = -\frac{1}{\sqrt{2}} E$.
% Disturbance decoupling between player $1$ and $4$ is guaranteed since $\half EE - \half EE = 0$ by design (cf.~Example~\ref{ex:DD_4Players}).
A path $p =(1, v_1, \ldots, v_{k-1}, 4)$ of length $k$ must have path weight $(\frac{-1}{\sqrt{2}})^{m_\delta}(\frac{1}{\sqrt{2}})^{m_\gamma}E^k$, where $m_\delta$ ($m_\gamma$) denotes the number of times the edge with weight $\delta$ ($\gamma$) is traversed in $p$.

Disturbance decoupling between players $1$ and $4$ is guaranteed if all paths of length $k \in (0,36)$ satisfy~\eqref{eqn:pathCondition}. We can numerically verify that Proposition~\ref{prop:disturbanceRejection} is satisfied or make the following graph-theoretic observations based on Theorem~\ref{thm:disturbanceDecoupling}. First, due to the symmetry within the game graph, the existence of path $p=(1, v_1, \ldots, v_{k-1}, 4)$ with path weight $ \textstyle L = (\frac{-1}{\sqrt{2}})^{m_\delta}(\frac{1}{\sqrt{2}})^{m_\gamma}E^k$ implies the existence of path $\hat{p}=(1, \hat{v}_1, \ldots, \hat{v}_{k-1}, 4)$ with path weight $\hat{L} = (\tfrac{-1}{\sqrt{2}})^{
\hat{m}_\delta}(\tfrac{1}{\sqrt{2}})^{\hat{m}_\gamma}E^k$, where $m_\gamma = \hat{m}_\delta$ and $m_\delta = \hat{m}_\gamma$.  Second, since edges $(3,4)$ and $(2,4)$ form a cut between player $1$ and player $4$ in the game graph, any path between them has the property that $m_\gamma + m_\delta $ is odd. From these observations, we can conclude that $L = - \hat{L}$. Since each path $p$ of length $k$ and weight $L$ can be paired with path $\hat{p}$ of equivalent length $k$ and weight $\hat{L} = -L$, we conclude that all path sets $\mc{P}_{14}^k$ where $k > 0$ must satisfy Theorem~\ref{thm:disturbanceDecoupling}.

To numerically verify disturbance decoupling, we simulate the uncorrupted learning trajectory of $z$, shown in the left plot of Figure~\ref{fig:SimRes} in purple. We then inject a random disturbance into player $1$'s gradient updates as given by~\eqref{eqn:noiseModel} with increasing magnitude, and observe its effects on each player's action. A sample corrupted trajectory is shown in the left plot of Figure~\ref{fig:SimRes} in brown.
% While the noisy trajectory differs from the purple trajectory, their horizontal values match at every time step $t$, implying that the $u_4$, which only controls the horizontal position, should also remain identical. This is verified in
In the bottom right plot of Figure~\ref{fig:SimRes}, we show the total error in each player's action from to the uncorrupted optimal action. We observe that player $4$ does not deviate from the optimal action, while player $1$'s action error increases as the disturbance magnitude increases. 
% Player two and three's error is affected symmetrically as we can observe from the game graph. 
We note that these results hold despite the fact that gradient-based learning no longer converges. In the top right plot of Figure~\ref{fig:SimRes}, individual player costs are compared in one round of gradient-based learning where $\norm{d_i} \leq 50$ is injected. Interestingly, despite action remaining uncorrupted, player $4$'s cost \emph{is} disturbance affected. Note that the disturbance decoupling in actions does not necessarily  imply disturbance  decoupling in  costs.
% \tc{red}{The result of decoupling player $4$'s action is that the vertical position of $x$ converges, whereas an LQ game whose game graph does not ensure disturbance decoupling between player $1$ and player $4$ will diverge in both $x$'s vertical and horizontal positions.} 
%\todo[inline]{Edited! May be we do not need the 2nd sentence.}
%\textcolor{blue}{We conclude that disturbance decoupling in action does not imply disturbance decoupling in cost.
%However, disturbance decoupled actions may be desirable for other reasons, e.g., when  players' actions aims to ensure the player's internal stability.}

% We then simulate gradient learning dynamics, where player one is updating with noise chosen uniformly at random from range $[-100, 100]$. The resulting deviation of player's learning trajectory from the undisturbed learning trajectories are shown in Figure~\ref{fig:noisyLQgame}. 
% \begin{figure}
%     \centering
%     \includegraphics[width=0.8\columnwidth]{}
%     \caption{}
%     % Top: trajectory of $x^t$ without noise (purple) and with noise (brown). The destinations $c_i$ of each player are given by triangles, control at the same time step are connected by dotted grey line. Bottom: error of player's control with and without noise over learning iterations $k$. (blue = $p1$, yellow = $p2$, green  = $p3$, red = $p4$)
%     \label{fig:noisyLQgame}
% \end{figure}
\begin{figure}
  \centering
  \begin{subfigure}[b]{0.44\columnwidth}
    \includegraphics[width=\columnwidth]{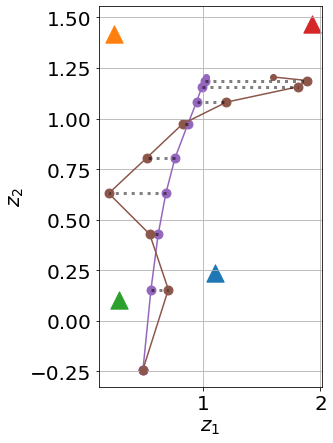}
  \end{subfigure}
  \begin{subfigure}[b]{0.4\columnwidth}
    \includegraphics[width=\columnwidth]{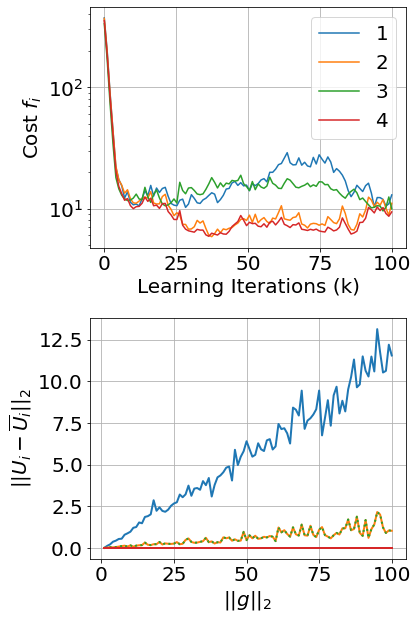}
  \end{subfigure}
  \caption{
  Left: Trajectory of $z$ with and without disturbances. Players' preferred destinations are given by triangles. Top right: Players' game costs during learning. Bottom right: Players' control error as a function of disturbance magnitude.}
  \label{fig:SimRes}
\end{figure}

\section{Conclusion}\label{sec:conclusion}
In this paper, we investigated and characterized the effects of gradient disturbances on an $N$--player gradient-based learning dynamics. For quadratic games, we defined disturbance decoupling for arbitrary disturbances, and showed the cost coupling structure is crucial in facilitating decoupling individual player's action from input disturbance. Our future work aims to leverage these analysis results to design incentives for players to ensure disturbance decoupling. 
\bibliographystyle{IEEEtran}
\bibliography{reference}

% Generated by IEEEtran.bst, version: 1.12 (2007/01/11)
\begin{thebibliography}{10}
\providecommand{\url}[1]{#1}
\csname url@samestyle\endcsname
\providecommand{\newblock}{\relax}
\providecommand{\bibinfo}[2]{#2}
\providecommand{\BIBentrySTDinterwordspacing}{\spaceskip=0pt\relax}
\providecommand{\BIBentryALTinterwordstretchfactor}{4}
\providecommand{\BIBentryALTinterwordspacing}{\spaceskip=\fontdimen2\font plus
\BIBentryALTinterwordstretchfactor\fontdimen3\font minus
  \fontdimen4\font\relax}
\providecommand{\BIBforeignlanguage}[2]{{%
\expandafter\ifx\csname l@#1\endcsname\relax
\typeout{** WARNING: IEEEtran.bst: No hyphenation pattern has been}%
\typeout{** loaded for the language `#1'. Using the pattern for}%
\typeout{** the default language instead.}%
\else
\language=\csname l@#1\endcsname
\fi
#2}}
\providecommand{\BIBdecl}{\relax}
\BIBdecl

\bibitem{fudenberg1998theory}
D.~Fudenberg, F.~Drew, D.~K. Levine, and D.~K. Levine, \emph{The theory of
  learning in games}.\hskip 1em plus 0.5em minus 0.4em\relax MIT press, 1998,
  vol.~2.

\bibitem{mazumdar2018convergence}
E.~Mazumdar, L.~J. Ratliff, and S.~S. Sastry, ``On the convergence of
  gradient-based learning in continuous games,'' \emph{SIAM J. Mathematics of
  Data Science}, 2019.

\bibitem{chasnov:2019aa}
B.~Chasnov, L.~J. Ratliff, E.~Mazumdar, and S.~Burden, ``Convergence analysis
  of gradient-based learning in continuous games,'' in \emph{Proc. 35th Conf.
  Uncertainty Artif. Intell. (UAI)}, 2019.

\bibitem{mazumdar2019policy}
E.~Mazumdar, L.~J. Ratliff, M.~I. Jordan, and S.~S. Sastry, ``Policy-gradient
  algorithms have no guarantees of convergence in continuous action and state
  multi-agent settings,'' \emph{Int.~Conf.~Autonomous Agents and Multi-Agent
  Systems (AAMAS)}, 2020.

\bibitem{shalev2017failures}
S.~Shalev-Shwartz, O.~Shamir, and S.~Shammah, ``Failures of gradient-based deep
  learning,'' in \emph{Int.~Conf.~Machine Learning}.\hskip 1em plus 0.5em minus
  0.4em\relax JMLR, 2017, pp. 3067--3075.

\bibitem{li2019robust}
S.~Li, Y.~Wu, X.~Cui, H.~Dong, F.~Fang, and S.~Russell, ``Robust multi-agent
  reinforcement learning via minimax deep deterministic policy gradient,'' in
  \emph{AAAI Conf. Artif. Intell.}, 2019.

\bibitem{bottou2010large}
L.~Bottou, ``Large-scale machine learning with stochastic gradient descent,''
  in \emph{Proc.~Computational Statistics}.\hskip 1em plus 0.5em minus
  0.4em\relax Springer, 2010, pp. 177--186.

\bibitem{jiao2016multi}
Q.~Jiao, H.~Modares, S.~Xu, F.~L. Lewis, and K.~G. Vamvoudakis, ``Multi-agent
  zero-sum differential graphical games for disturbance rejection in
  distributed control,'' \emph{Automatica}, vol.~69, pp. 24--34, 2016.

\bibitem{ratliff:2016aa}
L.~J. Ratliff, S.~A. Burden, and S.~S. Sastry, ``On the characterization of
  local nash equilibria in continuous games,'' \emph{IEEE Trans. Autom.
  Control}, vol.~61, no.~8, pp. 2301--2307, Aug. 2016.

\bibitem{shamma2005dynamic}
J.~S. Shamma and G.~Arslan, ``Dynamic fictitious play, dynamic gradient play,
  and distributed convergence to nash equilibria,'' \emph{IEEE Trans. Autom.
  Control}, vol.~50, no.~3, pp. 312--327, 2005.

\bibitem{zhou:2017aa}
Z.~{Zhou}, P.~{Mertikopoulos}, A.~L. {Moustakas}, N.~{Bambos}, and P.~{Glynn},
  ``Mirror descent learning in continuous games,'' in \emph{Proc. 56th IEEE
  Conf. Decision Control}, Dec 2017, pp. 5776--5783.

\bibitem{trentelman2012control}
H.~L. Trentelman, A.~A. Stoorvogel, and M.~Hautus, \emph{Control theory for
  linear systems}.\hskip 1em plus 0.5em minus 0.4em\relax Springer Science \&
  Business Media, 2012.

\bibitem{wonham1974linear}
W.~M. Wonham, ``Linear multivariable control,'' in \emph{Optimal control theory
  and its applications}.\hskip 1em plus 0.5em minus 0.4em\relax Springer, 1974,
  pp. 392--424.

\bibitem{dion2003generic}
J.-M. Dion, C.~Commault, and J.~Van Der~Woude, ``Generic properties and control
  of linear structured systems: a survey,'' \emph{Automatica}, vol.~39, no.~7,
  pp. 1125--1144, 2003.

\bibitem{monderer1996potential}
D.~Monderer and L.~S. Shapley, ``Potential games,'' \emph{Games Econ. Behav.},
  vol.~14, no.~1, pp. 124--143, 1996.

\bibitem{gillies1959solutions}
D.~B. Gillies, ``Solutions to general non-zero-sum games,'' \emph{Contributions
  to the Theory of Games}, vol.~4, pp. 47--85, 1959.

\bibitem{vlatakis2019poincare}
E.-V. Vlatakis-Gkaragkounis, L.~Flokas, and G.~Piliouras, ``Poincar{\'e}
  recurrence, cycles and spurious equilibria in gradient-descent-ascent for
  non-convex non-concave zero-sum games,'' in \emph{Adv. Neural Inf. Process.
  Syst.}, 2019, pp. 10\,450--10\,461.

\bibitem{balduzzi2018re}
D.~Balduzzi, K.~Tuyls, J.~Perolat, and T.~Graepel, ``Re-evaluating
  evaluation,'' in \emph{Adv. Neural Inf. Process. Syst.}, 2018, pp.
  3268--3279.

\bibitem{balduzzi2020smooth}
D.~Balduzzi, W.~M. Czarnecki, T.~W. Anthony, I.~M. Gemp, E.~Hughes, J.~Z.
  Leibo, G.~Piliouras, and T.~Graepel, ``Smooth markets: A basic mechanism for
  organizing gradient-based learners,'' in \emph{Int.~Conf.~Representation
  Learning (ICRL)}, 2020.

\bibitem{bailey2019finite}
J.~P. Bailey, G.~Gidel, and G.~Piliouras, ``Finite regret and cycles with fixed
  step-size via alternating gradient descent-ascent,'' \emph{arXiv preprint
  arXiv:1907.04392}, 2019.

\bibitem{paccagnan}
D.~{Paccagnan}, B.~{Gentile}, F.~{Parise}, M.~{Kamgarpour}, and J.~{Lygeros},
  ``{Distributed computation of generalized Nash equilibria in quadratic
  aggregative games with affine coupling constraints},'' in \emph{Proc.~IEEE
  55th Conf.~Decision and Control}, 2016, pp. 6123--6128.

\bibitem{lutati2014congestion}
B.~Lutati, V.~Levit, T.~Grinshpoun, and A.~Meisels, ``Congestion games for
  v2g-enabled ev charging,'' in \emph{AAAI Conf. Artif. Intell.}, 2014.

\bibitem{alpcan2010network}
T.~Alpcan and T.~Ba{\c{s}}ar, \emph{Network security: A decision and
  game-theoretic approach}.\hskip 1em plus 0.5em minus 0.4em\relax Cambridge
  University Press, 2010.

\end{thebibliography}

\end{document}